\def\e#1{\emph{#1}}
\newcommand{\eat}[1]{}
\newcommand{\set}[1]{\{#1\}}
\def\eqdef{\stackrel{\textsf{\tiny def}}{=}}
\newcommand{\algname}[1]{{\sf #1}}
\def\myrulewidth{3.20in}
\def\therule{\rule{\myrulewidth}{0.2pt}}
\newenvironment{insidecode}[3]
{
\begin{tabular}{p{\myrulewidth}}
\multicolumn{1}{c}{\rule{0mm}{3mm}{\bf #3} $\algname{#1}(\mbox{#2})$\vspace{-0.6em}}\\
\therule\vskip-0.8em\therule
\vspace{0em}
\begin{algorithmic}[1]}
{\end{algorithmic}
\vskip-0.3em\therule
\end{tabular}}
\newcommand{\depset}{\mathrm{\Delta}}
\def\phi{\varphi}
\newtheorem{problem}[theorem]{Problem}
\newenvironment{repeatresult}[2]
{\vskip0.5em\par\textsc{#1} #2.\em}
{\vskip1em}
\newenvironment{reptheorem}[1]{\begin{repeatresult}{Theorem}{#1}}{\end{repeatresult}}
\newenvironment{replemma}[1]{\begin{repeatresult}{Lemma}{#1}}{\end{repeatresult}}
\def\N{\mathcal{N}}
\newtheorem{commentthm}[theorem]{Comment}
\newenvironment{citedtheorem}[1]
{\begin{theorem}\hskip-0.2em\e{\cite{#1}}\,\,}
{\end{theorem}}
\def\cost{\mathit{cost}}
\def\top{\mathsf{top}}
\def\val#1{\texttt{#1}}
\def\rel#1{\textsc{#1}}
\def\att#1{\textsf{#1}}
\newenvironment{citemize}
{\begin{compactitem}}
{\end{compactitem}}
\DeclareMathOperator*{\argmin}{argmin}
\def\ra{\rightarrow}
\newenvironment{subroutine}
{\begin{algorithm}\floatname{algorithm}{Subroutine}}
{\end{algorithm}}
\newcounter{subroutine}
\def\cost{\mathsf{cost}}
\def\dbcost{\mathsf{cost}}
\def\vio{\mathsf{vio}}
\def\assn{\mathrel{{:}{=}}}
\title{Database Repairing with Soft Functional Dependencies} 
\author{Nofar Carmeli}{Technion, Haifa,  Israel}{}{}{}
\author{Martin Grohe}{RWTH Aachen University, Germany}{}{}{}
\author{Benny Kimelfeld}{Technion, Haifa, Israel}{}{}{}
\author{Ester Livshits}{Technion, Haifa, Israel}{}{}{}
\author{Muhammad Tibi}{Technion, Haifa, Israel}{}{}{}
\authorrunning{N. Carmeli, M. Grohe, B. Kimelfeld, E. Livshits, and M. Tibi} 
\keywords{Soft constraints, soft repairs, functional dependencies} 
\newcommand{\nofar}[1]{\textcolor{purple}{\textbf{Nofar:} #1}}
\definecolor{darkgreen}{RGB}{0,143,80}
\newcommand{\martindone}[1]{}
\begin{document}

\maketitle

\begin{abstract}
  A common interpretation of soft constraints penalizes the database
  for every violation of every constraint, where the penalty is the
  cost (weight) of the constraint. A computational challenge is that
  of finding an optimal subset: a collection of database tuples that
  minimizes the total penalty when each tuple has a cost of being
  excluded. When the constraints are strict (i.e., have an infinite
  cost), this subset is a ``cardinality repair'' of an inconsistent
  database; in soft interpretations, this subset corresponds to a
  ``most probable world'' of a probabilistic database, a ``most likely
  intention'' of a probabilistic unclean database, and so on.  Within the class of
  functional dependencies, the complexity of finding a cardinality
  repair is thoroughly understood. Yet, very little is known about the
  complexity of this problem in the more general soft semantics. This
  paper makes a significant progress in this direction. In addition to
  general insights about the hardness and approximability of the
  problem, we present algorithms for two special cases: a single
  functional dependency, and a bipartite matching. The latter is the
  problem of finding an optimal ``almost matching'' of a bipartite
  graph where a penalty is paid for every lost edge and every
  violation of monogamy.
\end{abstract}

\section{Introduction}

Soft variants of database constraints (also referred to as \e{weak} or \e{approximate} constraints) have been a building block of various challenges in data management. In constraint discovery and mining, for instance, the goal is to find constraints, such as Functional Dependencies (FDs)~\cite{DBLP:journals/cj/HuhtalaKPT99,DBLP:journals/cbm/CombiMSSAMP15,DBLP:conf/apweb/LiLCJY16} and beyond~\cite{DBLP:journals/pvldb/ChuIP13,DBLP:journals/pvldb/LivshitsHIK20,DBLP:journals/pvldb/PenaAN19}, that generally hold in the database but not necessarily in a perfect manner.  There, the reason for the violations might be rare events (e.g., agreement on the zip code but not the state) or noise (e.g., mistyping).
Soft constraints also arise when reasoning about uncertain data~\cite{DBLP:conf/icdt/SaIKRR19, DBLP:conf/pods/JhaRS08, DBLP:journals/vldb/SenDG09,GVSBUDA14}---the database is viewed as a probabilistic space over possible worlds, and the violation of a weak constraint in a possible world is viewed as evidence that affects the world's probability.

Our investigation concerns the latter application of soft constraints.  To be more precise, the semantics is that of a \e{parametric factor graph}: the probability of a possible world is the product of \e{factors} where every violation of the constraint contributes one factor; in turn, this factor is a weight that is assigned upfront to the constraint. This approach is highly inspired by successful concepts such as the Markov Logic Network (MLN)~\cite{Richardson:2006:MLN:1113907.1113910}.
The computational challenges are the typical ones of probabilistic modeling: marginal inference (compute the probability of a query answer) and maximum likelihood (find the most probable world)---the problem that we focus on here.

More specifically, we investigate the complexity of finding a most probable world in the case where the constraints are FDs. By taking the logarithms of the factors,
this problem can be formally defined as follows. We are given a database $D$ and a set $\Delta$ of FDs, where every tuple and every FD has a weight (a nonnegative number). We would like to obtain a cleaner subset $E$ of $D$ by deleting tuples. The cost of $E$ includes a penalty for every deleted tuple and a penalty for every violation of (i.e., pair of tuples that violates) an FD; the penalties are the weights of the tuple and the FD, respectively. The goal is to find a subset $E$ with a minimal cost. In what follows, we refer to such $E$ as an \e{optimal subset} and to the optimization problem of finding an optimal subset as \e{soft repairing}. The optimal subset corresponds to the ``most likely intention'' in the Probabilistic Unclean Database (PUD) framework of De Sa, Ilyas, Kimelfeld, R{\'{e}} and Rekatsinas~\cite{DBLP:conf/icdt/SaIKRR19} in a restricted case that is studied in their work, and to the ``most probable world'' in the probabilistic database model of Sen, Deshpande and Getoor~\cite{DBLP:journals/vldb/SenDG09}. In the special case where the FDs are hard constraints (i.e., their weight is infinite or just too large to pay), an optimal subset is simply what is known as a ``cardinality repair''~\cite{DBLP:conf/icdt/LopatenkoB07} or, equivalently~\cite{DBLP:journals/tods/LivshitsKR20}, a ``most probable database''~\cite{GVSBUDA14}.

The computational challenge of soft repairing is that there are exponentially many candidate subsets.
We investigate the data complexity of the problem, where the database schema and the FD set are fixed, and the input consists of the database $D$ and all involved weights. Moreover, we assume that $D$ consists of a single relation; this is done without loss of generality, since the problem
boils down to soft repairing each relation independently (since an FD does not involve more than one relation).
 
The complexity of the problem is very well understood in the case of hard constraints (cardinality repairs).  Gribkoff, Van den Broeck and Suciu~\cite{GVSBUDA14} established complexity results for the case of unary FDs (having a single attribute on the left-hand side), and Livshits, Kimelfeld and Roy~\cite{DBLP:journals/tods/LivshitsKR20} completed the picture to a full (effective) dichotomy over all possible sets of FDs. For example, the problem is solvable in polynomial time for the FD sets $\set{A\ra B}$, $\set{A \ra B,B\ra A}$ and $\set{A\ra B,B\ra A,B\ra C}$, but is NP-hard for $\set{A \ra B,B\ra C}$. In contrast, very little is known about the more general case where the FDs are soft (and violations are allowed), where the problem seems to be fundamentally harder, both to solve and to reason about. Clearly, for every $\Delta$ where  it is intractable to find a cardinality repair, the soft version is also intractable. But the other direction is false (under conventional complexity assumptions). For example, soft repairing is hard for
$\Delta=\set{A\ra B,B\ra A,B\ra C}$, for the following reason. We can set the weights of $A\ra B$ and $B\ra C$ to be very high, making each of them a hard constraint in effect, and the weight of $B\ra A$ very low, making it ignorable in effect. Hence, an optimal subset is a cardinality repair for $\set{A\ra B,B\ra C}$ that, as said above, is hard to compute.

So, which sets of FDs have a tractable soft repairing?
The only polynomial-time algorithm we are aware of is that of De Sa et al.~\cite{DBLP:conf/icdt/SaIKRR19} for the special case of a single key constraint, that is, $\Delta=\set{X\ra Y}$ where $XY$ contain all of the schema attributes;
they have left the more general case (that we study here) open.
In this work, we make substantial progress
in answering this question by presenting algorithms for two types of FD sets: \e{(a)} a single FD \e{and (b)} a matching constraint.

The first type generalizes the tractability of De Sa et al.~\cite{DBLP:conf/icdt/SaIKRR19} from a key constraint to an arbitrary FD (as long as it is the only FD in $\Delta$). Like theirs, our algorithm employs dynamic programming, but in a more involved fashion. This is because their algorithm is based on the fact that in a key constraint $X\ra Y$,
any two tuples that agree on $X$ are necessarily conflicting.
We also show that our algorithm can be generalized to additional sets of FDs. For example, it turns our that
the FD set
$\set{\textrm{name}\ra\textrm{address}\,,\,\textrm{name address}\ra\textrm{email}}$ is tractable as well. (Note that the $\textrm{address}$ attribute on the left-hand side of the second FD is not redundant, as in the ordinary semantics, since the FDs are treated as soft constraints.) In Section~\ref{sec:single} we phrase the more general condition that this FD set satisfies.

The second type, matching constraints, refers to FD sets $\Delta=\set{X\rightarrow Y,X'\rightarrow Y'}$ over a schema with the attributes $A_1$, \dots,$A_k$ where $X\cup Y=X'\cup Y'=X\cup X'=\set{A_1,\dots,A_k}$. The simplest example is $\set{A\ra B,B\ra A}$ over the binary schema $(A,B)$ that represents a bipartite graph, and the problem is that of finding the best ``almost matching'' of a bipartite graph where a penalty is paid for every lost edge and every
violation of monogamy. 
A more involved example is
$\set{\textrm{fn ln}\ra\textrm{addr}\,,\, \textrm{fn addr}\ra\textrm{ln}}$ over the schema
$(\textrm{fn},\textrm{ln},\textrm{addr})$. Our algorithm is based on a reduction to the
\e{Minimum Cost Maximum Flow} (MCMF) problem~\cite{10.1287/moor.15.3.430}.

Whether our algorithms cover all of tractable cases remains an open problem for future investigation. (In the Conclusions we discuss the simplest FD sets where the question is left unsolved.) We do show, however, that
there is a polynomial-time approximation algorithm with an approximation factor $3$,
that is, a subset where the penalty is at most three times the optimum.

The rest of the paper is organized as follows. We give the formal setup and the problem definition in Section~\ref{sec:preliminaries}. We then discuss the complexity of the general problem and its relationship to past results in Section~\ref{sec:complexity}. We describe our algorithm for soft repairing in Sections~\ref{sec:single} and~\ref{sec:matching} for a single FD and a matching constraint, respectively, and conclude in Section~\ref{sec:conclusions}. For lack of space, some of the proofs are given in the Appendix.

\section{Formal Setup}\label{sec:preliminaries}
We begin with preliminary definitions and terminology that we use
throughout the paper.

\subsection{Databases, FDs and Repairs}
A \e{relation schema} $R(A_1,\dots,A_k)$ consists of a relation symbol
$R$ and a set $\set{A_1,\dots,A_k}$ of attributes. A \e{database} $D$
over $R$ is a set of facts $f$ of the form $R(c_1,\dots,c_k)$, where
each $c_i$ is a \e{constant}. We denote by $f[A_i]$ the value that the
fact $f$ associates with attribute $A_i$ (i.e., $f[A_i]=c_i$).
Similarly, if $X=B_1\cdots B_k$ is a sequence of attributes from
$\set{A_1,\dots,A_k}$, then $f[X]$ is the tuple $(f[B_1],\dots,f[B_k])$.

A \e{Functional Dependency} (FD) over the relation schema
$R(A_1,\dots,A_k)$ is an expression $\varphi$ of the form
$X\rightarrow Y$ where $X,Y\subseteq\set{A_1,\dots,A_k}$. A
\e{violation} of an FD in a database $D$ is a pair $\set{f,g}$ of
tuples from $D$ that agrees on the left-hand side (i.e., $f[X]=g[X]$)
but disagrees on the right-hand side (i.e., $f[Y]\neq g[Y]$).
An FD $X\rightarrow Y$ is \e{trivial} if $Y \subseteq X$.
We denote by $\vio(D,\varphi)$ the set of all the violations of the FD
$\varphi$ in $D$. We say that $D$ \e{satisfies} $\varphi$, denoted
$D\models\varphi$, if it has no violations (i.e., $\vio(D,\varphi)$ is
empty). The database $D$ satisfies a set $\Delta$ of FDs, denoted by
$D\models\Delta$, if $D$ satisfies every FD in $\Delta$; otherwise,
$D$ violates $\Delta$ (denoted $D\not\models\Delta$).

When there is no risk of ambiguity, we may omit the specification of
the relation schema $R(A_1,\dots,A_k)$ and simply assume that the
involved databases and constraints are all over the same schema.

Let $D$ be a database and let $\Delta$ be a set of FDs. A \e{repair}
(\e{of
  $D$ w.r.t.~$\Delta$}) is a maximal consistent subset $E$; that is,
$E\subseteq D$ and $E\models\Delta$, and moreover,
$E'\not\models\Delta$ for every $E'$ such that
$E\subsetneq E'\subseteq D$. Note that the number of repairs can be
exponential in the number of facts of $D$. A \e{cardinality repair} is a
repair $E$ of a maximal cardinality (i.e., $|E|\geq |E'|$ for every repair $E'$).

\subsection{Soft Constraints}

We define the concept of \e{soft constraints} (or \e{weak constraints}
or \e{weighted rules}) in the standard way of ``penalizing'' the
database for every missing fact, on the one hand, and every violation,
on the other hand. This is the concept adopted in past work such as
the \e{parfactors} of De Sa et al.~\cite{DBLP:conf/icdt/SaIKRR19}, the
\e{soft keys} of Jha et al.~\cite{DBLP:conf/pods/JhaRS08}, and the
\e{PrDB} model of Sen et al.~\cite{DBLP:journals/vldb/SenDG09}. The
concept can be viewed as a special case of the \e{Markov Logic
  Network} (MLN)~\cite{Richardson:2006:MLN:1113907.1113910}.

Formally, let $D$ be a database and $\Delta$ a set of FDs. We assume
that every fact $f\in D$ and every FD $\varphi\in\Delta$ have a
nonnegative weight, hereafter denoted $w_f$ and $w_\varphi$,
respectively. (The weight of a fact is sometimes viewed as the log of
a validity/existence
probability~\cite{DBLP:journals/vldb/SenDG09,GVSBUDA14}.) The \e{cost}
of a subset $E$ of a database $D$ is then defined as follows.
\begin{equation}
  \label{eq:cost}
  \dbcost(E\mid D)\eqdef \left(\sum_{f\in (D\setminus E)}\!\!w_f\right)
  + \left(\sum_{\varphi\in\Delta}w_\varphi|\vio(E,\varphi)|\right)
\end{equation}

As for the computational model, we assume that every weight is a
rational number $r/q$ that is represented using the numerator and the
denominator, namely $(r,q)$, where each of the two is an integer
represented in the standard binary manner.

\subsection{Problem Definition: Soft Repairing} The problem we study
in this paper, referred to as \e{soft repairing}, is the optimization
problem of finding a database subset with a minimal cost. Since we
consider the data complexity of the problem, we associate with each
relation schema and set of FDs a separate computational problem.

\begin{problem}[Soft Repairing]
  Let $R(A_1,\dots,A_k)$ be a relation schema and $\Delta$ a
  set of
  FDs. \e{Soft repairing} (\e{for $R(A_1,\dots,A_k)$ and $\Delta$}) is
  the following optimization problem: Given a database $D$, find an
  optimal subset of $D$, that is, a subset
  $E$ of $D$ with a minimal
  $ \dbcost(E\mid D)$. \end{problem}

Note that a cardinality repair is an optimal subset in the special
case where the weight $w_\varphi$ of every FD $\varphi$ is $\infty$
(or just higher than the cost of deleting the entire database), and
the weight $w_f$ of every fact $f$ is $1$. Livshits et
al.~\cite{DBLP:journals/tods/LivshitsKR20} studied the complexity of
finding a \e{weighted cardinality repair}, which is the same as a
cardinality repair but the weight $w_f$ of every fact $f$
can be arbitrary. Hence, both types of cardinality repairs are
consistent (i.e., the constraints are strictly satisfied). In
contrast, an optimal subset in the general case may violate one or
more of the FDs. In the next section we recall the known complexity
results for cardinality and weighted cardinality repairs.

\def\auf{f^{\mathrm{a}}}
\def\instf{f^{\mathrm{i}}}
\def\pubf{f^{\mathrm{p}}}
\def\citf{f^{\mathrm{c}}}

{
\definecolor{Gray}{gray}{0.9}
\def\emprow{\multicolumn{3}{l}{}}
\begin{figure}[t]
\small
\centering
\begin{subfigure}[b]{\linewidth}\centering
\begin{tabular}{|c|c|c|c|c|c|r} 
\cline{1-6}
\rowcolor{Gray}
\multicolumn{7}{l}{\rel{Flights}}\\\cline{1-6}
 $\att{Flight}$ & $\att{Airline}$ & $\att{Date}$ & $\att{Origin}$ & $\att{Destination}$ & $\att{Airplane}$ \\\cline{1-6}
\val{UA123} & \val{United Airlines} & \val{01/01/2021} & \val{LA} & \val{NY} & \val{N652NW}& $3$ \\
\val{UA123} & \val{United Airlines} & \val{01/01/2021} & \val{NY} & \val{UT} & \val{N652NW}& $2$ \\
\val{UA123} & \val{Delta} & \val{01/01/2021} & \val{LA} & \val{NY} & \val{N652NW}& $1$ \\
\val{DL456} & \val{Southwest} & \val{02/01/2021} & \val{NC} & \val{MA} & \val{N713DX}& $2$ \\
\val{DL456} & \val{Southwest} & \val{03/01/2021} & \val{NJ} & \val{FL} & \val{N245DX}& $1$ \\
\val{DL456} & \val{Delta} & \val{03/01/2021} & \val{CA} & \val{IL} & \val{N819US} & $4$ \\
\cline{1-6}
\end{tabular}
\caption{$D$}
\end{subfigure}
\begin{subfigure}[b]{\linewidth}\centering
\begin{tabular}{|c|c|c|c|c|c|r} 
\cline{1-6}
\rowcolor{Gray}
\multicolumn{6}{l}{\rel{Flights}}\\\cline{1-6}
 $\att{Flight}$ & $\att{Airline}$ & $\att{Date}$ & $\att{Origin}$ & $\att{Destination}$ & $\att{Airplane}$ \\\cline{1-6}
\val{UA123} & \val{United Airlines} & \val{01/01/2021} & \val{NY} & \val{UT} & \val{N652NW}& $2$\\
\val{DL456} & \val{Southwest} & \val{02/01/2021} & \val{NC} & \val{MA} & \val{N713DX}& $2$ \\
\val{DL456} & \val{Southwest} & \val{03/01/2021} & \val{NJ} & \val{FL} & \val{N245DX}& $1$ \\
\cline{1-6}
\end{tabular}
\caption{$E_1$}
\end{subfigure}
\begin{subfigure}[b]{\linewidth}\centering
\begin{tabular}{|c|c|c|c|c|c|r} 
\cline{1-6}
\rowcolor{Gray}
\multicolumn{6}{l}{\rel{Flights}}\\\cline{1-6}
 $\att{Flight}$ & $\att{Airline}$ & $\att{Date}$ & $\att{Origin}$ & $\att{Destination}$ & $\att{Airplane}$ \\\cline{1-6}
\val{UA123} & \val{United Airlines} & \val{01/01/2021} & \val{LA} & \val{NY} & \val{N652NW}& $3$\\
\val{DL456} & \val{Delta} & \val{03/01/2021} & \val{CA} & \val{IL} & \val{N819US} & $4$\\
\cline{1-6}
\end{tabular}
\caption{$E_2$}
\end{subfigure}
\begin{subfigure}[b]{\linewidth}\centering
\begin{tabular}{|c|c|c|c|c|c|r} 
\cline{1-6}
\rowcolor{Gray}
\multicolumn{6}{l}{\rel{Flights}}\\\cline{1-6}
 $\att{Flight}$ & $\att{Airline}$ & $\att{Date}$ & $\att{Origin}$ & $\att{Destination}$ & $\att{Airplane}$ \\\cline{1-6}
\val{UA123} & \val{United Airlines} & \val{01/01/2021} & \val{LA} & \val{NY} & \val{N652NW}& $3$ \\
\val{UA123} & \val{United Airlines} & \val{01/01/2021} & \val{NY} & \val{UT} & \val{N652NW}& $2$\\
\val{DL456} & \val{Delta} & \val{03/01/2021} & \val{CA} & \val{IL} & \val{N819US} & $4$\\
\cline{1-6}
\end{tabular}
\caption{$E_3$}
\end{subfigure}
\caption{\label{fig:DB} For the relation
  $\rel{Flights}(\att{Flight},\att{Airline},\att{Date},\att{Origin},\att{Destination},\att{Airplane})$ and the
  FDs $\att{Flight}\ra\att{Airline}$ (with $w_{\varphi_1}=5$) and
  $\att{Flight Airline Date}\ra\att{Destination}$ (with $w_{\varphi_2}=1$), a database $D$, a cardinality repair $E_1$, a weighted cardinality repair $E_2$, and an optimal subset $E_3$.}
\end{figure}
}

\begin{example}\label{ex:running}
Our running example is based on the database of Figure~\ref{fig:DB} over the relation schema
$\rel{Flights}(\att{Flight},\att{Airline},\att{Date},\att{Origin},\att{Destination},\att{Airplane})$ that contains information about domestic flights in the United States. The weight of each tuple appears on the rightmost column. The FD set $\depset$ consists of the following FDs:
\begin{itemize}
    \item $\att{Flight}\ra\att{Airline}$: a flight is associated with a single airline.
    \item $\att{Flight Airline Date}\ra\att{Destination}$: a flight on a certain date has a single destination.
\end{itemize}
We assume that the weight of the first FD is $5$, and the weight of the second FD is $1$ (as the same flight number can be reused for different flights).

The database $E_1$ of Figure~\ref{fig:DB} is a cardinality repair of $D$ as no repair of $D$ can be obtained by removing less then three facts. However, $E_1$ is not a weighted cardinality repair, since its cost is eight, while the cost of $E_2$ is six. The reader can easily verify that $E_2$ is a weighted cardinality repair of $D$. Finally, $E_3$ is not a repair of $D$ in the traditional sense as it contains a violation of the second FD, but it is an optimal subset of $D$ with $\dbcost(E_3\mid D)=5$.
\qed
\end{example}

 
\eat{
 Note that every cardinality repair is a subset
  repair, but not necesseraly vice versa. A \e{weighted cardinality
  repair} is a consistent subset obtained by solving the following
optimization problem:
$$\min\{\sum_{f\in (D\setminus E)}w_f\mid E\subseteq D,E\models\Delta\}$$
Livshits et al.~\cite{DBLP:journals/tods/LivshitsKR20} have studied the computational complexity of the problem of computing a weighted cardinality repair (which they refer to as an \e{optimal repair}) and obtained a dichotomy classifying FD sets into those for which the problem can be solved in polynomial time, and those for which it is APX-complete (i.e., hard to approximate beyond some constant).

Here, we generalize this problem to account for soft constraints. That is, we assume that violations of the constraints are allowed, but with penalty. The penalty for violating an FD $\varphi$ is the weight $w_\varphi$ associated with the FD. We now formally define the problem. For a subset $E$ of $D$ we denote by $V(\varphi,E)$ the number of pairs of facts violating the FD. That is, 
$$V(\varphi,E)=\frac{1}{2}\left|\set{\set{f,g}\mid f,g\in E, \set{f,g}\not\models\varphi}\right|$$
\nofar{Why do we need this half?}
}
\section{Preliminary Complexity Analysis}\label{sec:complexity}

We consider the data complexity of the problem of computing an optimal subset. We assume that the schema and the set of FDs are fixed, and the input consists of the database.  Livshits et al.~\cite{DBLP:journals/tods/LivshitsKR20} studied the problems of finding a cardinality repair and a weighted cardinality repair, and established a dichotomy over the space of all the sets of functional dependencies.  In particular, they introduced an algorithm that, given a set $\Delta$ of FDs, decides whether:
\begin{enumerate}
\item A weighted cardinality repair can be computed in polynomial time; \e{or}
\item Finding a (weighted) cardinality repair is APX-complete.\footnote{Recall that APX is the class of
    NP optimization problems that admit constant-ratio approximations in polynomial time.
Hardness in APX is via the so called ``PTAS'' reductions (cf.~textbooks on approximation complexity, e.g., \cite{DBLP:reference/crc/2007aam}).}    
\end{enumerate}
No other possibility exists. The algorithm, which is depicted here as Algorithm~\ref{alg:simplify}, is a recursive procedure that attempts to simplify $\Delta$ at each iteration by finding a \e{removable} pair $(X,Y)$ of attribute sets, and removing every attribute of $X$ and $Y$ from all the FDs in $\Delta$ (which we denote by $\depset-XY$).  Note that $X$ and $Y$ may be the same, and then the condition states that every FD contains $X$ on the left hand side. If we are able to transform $\Delta$ to an empty set of FDs by repeatedly applying simplification, then the algorithm returns true and finding an optimal consistent subset is solvable in polynomial time. Otherwise, the algorithm returns false and the problem is APX-complete.  We state their result for later reference.

\begin{algorithm}[t]
\begin{algorithmic}
\STATE Remove trivial FDs from $\depset$
\IF{$\depset$ is not empty}
\STATE find a removable pair $(X,Y)$ of attribute sequences:
\begin{citemize}
\item $\mathrm{Closure_{\Delta}}(X)=\mathrm{Closure_{\Delta}}(Y)$
\item $XY$ is nonempty
 \item  Every FD in $\Delta$ contains either $X$ or $Y$ on the left-hand side
 \end{citemize}
\STATE $\Delta \assn \Delta-XY$
\ENDIF
\end{algorithmic}
\caption{$\algname{Simplify}$\label{alg:simplify}}
\end{algorithm}

\begin{citedtheorem}{DBLP:journals/tods/LivshitsKR20}\label{thm:livshits}
  Let $\Delta$ be a set of FDs. If $\Delta$ can be emptied via $\algname{Simplify()}$ steps, then a weighted cardinality repair can be computed in polynomial time; otherwise, finding a cardinality repair is
  APX-complete.
\end{citedtheorem}

The hardness side of Theorem~\ref{thm:livshits} immediately implies
the hardness of the more general soft-repairing problem.  Yet, the
other direction (tractability generalizes) is not necessarily true. As
discussed in the Introduction, if
$\Delta=\set{ A\rightarrow B, B\rightarrow A, B\rightarrow C}$, then
$\Delta$, as a set of hard constraints, is classified as tractable
according to Algorithm~\ref{alg:simplify}; however, this is not the
case for soft constraints. We can generalize this example by stating
that if $\Delta$ contains a \e{subset} that is hard according to
Theorem~\ref{thm:livshits}, then soft repairing is hard.  (This does not hold when considering only hard constraints, as the example shows
that there exists an easy $\Delta$ with a hard subset.)  In the
following sections, we are going to discuss tractable cases of FD
sets. Before that, we will show that the problem becomes tractable if
one settles for an approximation.


\subsection{Approximation}
The following theorem shows that soft repairing admits a
constant-ratio \e{approximation}, for the constant three, in
polynomial time.  This means that there is a polynomial-time algorithm
for finding a subset with a cost of at most three times the minimum.

\begin{theorem}\label{thm:approx}
For all FD sets, soft repairing admits a 3-approximation in polynomial time.
\end{theorem}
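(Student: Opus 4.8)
The plan is to recast soft repairing as a weighted \emph{prize-collecting vertex cover} instance and then approximate it by rounding the natural linear-programming relaxation.

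First I would build a graph $G$ whose vertices are the facts of $D$, each carrying the deletion weight $w_f$, and whose edges are the conflicting pairs: for every pair $\set{f,g}$ that violates at least one FD, put an edge with weight $w_{\set{f,g}}\eqdef\sum_{\varphi\in\Delta:\,\set{f,g}\in\vio(D,\varphi)}w_\varphi$. Since $E\subseteq D$, a pair $\set{f,g}$ violates $\varphi$ in $E$ exactly when $f,g\in E$ and $\set{f,g}\in\vio(D,\varphi)$, so reordering the double sum in~\eqref{eq:cost} gives $\dbcost(E\mid D)=\sum_{f\notin E}w_f+\sum_{\set{f,g}:\,f,g\in E}w_{\set{f,g}}$. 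Reading $T=D\setminus E$ as the set of ``deleted'' vertices, the second sum charges exactly those edges with no endpoint in $T$; hence soft repairing is precisely the problem of choosing $T$ to minimize the deletion cost of $T$ plus the total weight of the edges it fails to cover. There are only $O(|D|^2)$ candidate edges, so $G$ is built in polynomial time.

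Next I would solve the LP relaxation with a variable $x_f\in[0,1]$ per vertex (intended indicator of $f\in T$) and a variable $z_e\in[0,1]$ per edge (intended indicator that $e$ is left uncovered), minimizing $\sum_f w_f x_f+\sum_e w_e z_e$ subject to $x_f+x_g+z_{\set{f,g}}\geq 1$ for each edge $\set{f,g}$. The integral version of this program models $\dbcost$ exactly: given $E$, set $x_f=1$ iff $f\notin E$ and $z_e=1$ iff both endpoints survive, and the constraint forces $z_e=1$ precisely on the paid edges. Thus the LP optimum is a lower bound on the true optimum $\mathrm{OPT}$, and since the LP has polynomially many variables and constraints it is solved in polynomial time. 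Finally I would round at threshold $1/3$: delete every fact with $x_f^*\geq 1/3$ and keep the rest. Each deleted fact satisfies $w_f\leq 3w_f x_f^*$; an edge $\set{f,g}$ is paid only when neither endpoint is deleted, i.e.\ $x_f^*,x_g^*<1/3$, whence the constraint yields $z_{\set{f,g}}^*\geq 1-x_f^*-x_g^*>1/3$ and so $w_e\leq 3w_e z_e^*$. Summing, the cost of the rounded subset is at most $3\bigl(\sum_f w_f x_f^*+\sum_e w_e z_e^*\bigr)=3\cdot\mathrm{LP}\leq 3\cdot\mathrm{OPT}$.

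I expect no serious obstacle once the reduction is in place; the only points to verify with care are that the aggregated edge weight $w_{\set{f,g}}$ faithfully reproduces the per-FD violation sum so the integer program equals $\dbcost$ even when a pair violates several FDs, and that it is the \emph{three}-term covering constraint $x_f+x_g+z_e\geq 1$ that makes threshold-$1/3$ rounding lose exactly a factor of three (each rounded variable grows by at most $3$). A sharper factor $2$ would be attainable by a more refined primal-dual rounding, but the simple threshold argument already establishes the stated bound.
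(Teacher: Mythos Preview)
Your argument is correct. The paper rests on the same core observation---each violating pair can be resolved in three ways (delete $f$, delete $g$, or pay for the violation)---but packages it as a reduction to weighted set cover with element frequency three, keeping one element $(\{f,g\},\delta)$ per FD that the pair violates and then invoking Hochbaum's greedy bound. You instead aggregate all FDs violated by a fixed pair into a single edge weight, obtain a prize-collecting vertex-cover instance, and give a self-contained threshold-$1/3$ LP-rounding proof. The aggregation is sound (at LP optimum the per-FD slack variables for a fixed pair would all equal $\max(0,1-x_f-x_g)$, so collapsing them loses nothing), and your route is arguably cleaner since it makes the rounding explicit rather than citing a black box. Your closing aside is also correct: prize-collecting vertex cover admits a primal--dual $2$-approximation, so your formulation in fact yields a sharper bound than the paper states.
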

\begin{proof}
  We reduce soft repairing to the problem of finding a minimum
  weighted set cover where every element belongs to $3$ sets. `A simple greedy algorithm finds
  a $3$-approximation to this problem in linear time~\cite{hochbaum1982approximation}.

We set the elements to be $\{(\{f,g\},\delta)\mid f,g\in D, \delta\in\Delta, \text{$f$ and $g$ contradict $\delta$}\}$.
Each element $(\{f,g\},\delta)$ belongs to three sets:
$f$ with weight $w_f$, $g$ with weight $w_g$, and $(\{f,g\},\delta)$ with weight $w_\delta$.
Each minimal solution to this set cover problem can be translated to a soft repair:
the selected sets that correspond to tuples are removed in the repair.
Indeed, a minimal set cover of such a construction has to resolve each conflict by either paying for the removal of at least one of the tuples or paying for the violation.
\end{proof}

In terms of formal complexity, Theorem~\ref{thm:approx} implies that the problem of soft repairing is in APX (for every set of FDs).  From this, from Theorem~\ref{thm:livshits} and from the discussion that follows Theorem~\ref{thm:livshits}, we conclude the following.

\begin{corollary}\label{cor:hardcases-soft-repairing}
  Let $\Delta$ be a set of FDs. Soft repairing for $\Delta$ is in APX. Moreover, if any subset of $\Delta$
  cannot be emptied via $\algname{Simplify()}$ steps, then soft repairing is
  APX-complete for $\Delta$.
  \end{corollary}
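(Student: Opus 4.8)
The plan is to assemble the statement from the two results already established. Membership in APX is immediate: Theorem~\ref{thm:approx} exhibits a polynomial-time $3$-approximation for soft repairing for \emph{every} FD set $\Delta$, and a constant-ratio polynomial-time approximation is exactly what membership in APX requires. Hence all the real content lies in the APX-hardness half, and I would spend the proof there.

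For hardness, I would fix a subset $\Delta'\subseteq\Delta$ that cannot be emptied via $\algname{Simplify()}$ (this is the meaning of the hypothesis). By Theorem~\ref{thm:livshits}, finding a weighted cardinality repair for $\Delta'$ is APX-complete, hence APX-hard, and I would transfer this hardness through an approximation-preserving reduction from weighted cardinality repair for $\Delta'$ to soft repairing for $\Delta$. Given an instance $(D,(w_f)_{f\in D})$ of the former, the reduction keeps the same database $D$ and the same fact weights, assigns a large weight $W$ to every $\varphi\in\Delta'$, and assigns weight $0$ to every $\varphi\in\Delta\setminus\Delta'$; this is precisely the ``make the hard subset strict and ignore the rest'' trick previewed after Theorem~\ref{thm:livshits}. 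Since the zero-weight FDs contribute nothing, for every $E\subseteq D$ we get $\dbcost(E\mid D)=\bigl(\sum_{f\in D\setminus E}w_f\bigr)+W\sum_{\varphi\in\Delta'}|\vio(E,\varphi)|$, which equals the weighted deletion cost of $E$ exactly when $E\models\Delta'$ and is at least $W$ otherwise.

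The crux is to choose $W$ so the reduction preserves approximation ratios rather than merely optimal values. Deleting all of $D$ yields the consistent subset of cost $\sum_{f}w_f$, so the optimum $\opt$ of the cardinality-repair instance satisfies $\opt\le\sum_{f}w_f$, and the optimal soft-repairing cost coincides with $\opt$ because an optimal subset never pays $W$. I would set $W$ to exceed $2\sum_{f}w_f$, noting that $W$ still has a polynomial-size binary representation. Then any $\rho$-approximate soft solution $E$ with $\rho\le 2$ obeys $\dbcost(E\mid D)\le\rho\cdot\opt\le 2\sum_f w_f<W$, so $E$ cannot afford a single $\Delta'$-violation; thus $E\models\Delta'$ and its weighted deletion cost is $\dbcost(E\mid D)\le\rho\cdot\opt$, making $E$ itself a $\rho$-approximate cardinality repair. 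Both the instance map and the solution map (return $E$, or the empty set if $E\not\models\Delta'$) are polynomial-time, and since transferring APX-hardness only requires preserving ratios arbitrarily close to $1$, the range $\rho\le 2$ is enough.

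The step I expect to be the main obstacle is exactly this last one: arguing that the weight trick is a genuine \emph{approximation-preserving} reduction and not just a polynomial reduction between optima. The delicate point is that $\opt$ may be as large as $\sum_f w_f$, so choosing $W$ only marginally above $\sum_f w_f$ would leave room for a near-optimal solution that cheats on one violation of $\Delta'$; taking $W$ a fixed constant factor larger closes this gap while keeping the weights polynomially sized. Once this is verified, composing the reduction with the APX-hardness of weighted cardinality repair for $\Delta'$ from Theorem~\ref{thm:livshits} gives APX-hardness of soft repairing for $\Delta$, and together with the APX membership from Theorem~\ref{thm:approx} this yields APX-completeness.
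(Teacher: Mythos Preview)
Your proposal is correct and matches the paper's approach: APX membership from Theorem~\ref{thm:approx}, and APX-hardness via the weight trick of making the FDs in the hard subset $\Delta'$ effectively strict while zeroing out the rest, exactly as sketched in the discussion after Theorem~\ref{thm:livshits}. The paper states the corollary as an immediate consequence of those two ingredients and does not spell out the approximation-preserving argument; your treatment of the threshold $W>2\sum_f w_f$ to force $\rho$-approximate soft solutions (for $\rho\le 2$) to satisfy $\Delta'$ is a correct and welcome elaboration of a detail the paper leaves implicit.
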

  

\section{Algorithm for a Single Functional Dependency}\label{sec:single}
In this section, we consider the case of a single functional
dependency, and present a polynomial-time algorithm for soft
repairing. Hence, we establish the following result.

\begin{theorem}\label{thm:onefd}
  In the case of a single FD, soft repairing can be solved in
  polynomial time.
\end{theorem}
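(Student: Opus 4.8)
The plan is to exploit the fact that a single FD $\varphi = X \to Y$ creates conflicts only between facts that agree on $X$. This lets the problem decompose into independent subproblems, each of which I will solve by a dynamic program whose state tracks only a running count.

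\textbf{Decomposition.} First I would partition $D$ into \emph{blocks} according to the value of $X$: two facts lie in the same block exactly when they agree on $X$. Since $\varphi$ can only be violated by a pair of facts agreeing on $X$, every violation occurs inside a single block, and every deleted fact is charged within its own block. Hence $\dbcost(E\mid D)$ equals the sum over blocks $B$ of the cost contributed by $E\cap B$, so it suffices to solve each block independently and add the optimal values. Fix a block $B$ (all its facts share the same $X$-value) and partition it further into \emph{$Y$-classes} $C_1,\dots,C_m$, where facts in one class agree on $Y$ and facts in distinct classes disagree on $Y$. Two facts of $B$ violate $\varphi$ exactly when they lie in different $Y$-classes, so a subset that keeps $n_i$ facts from $C_i$ induces exactly $\sum_{i<j} n_i n_j$ violations. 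Crucially, this depends only on the counts $n_i$ and not on which facts are kept.

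\textbf{Reduction within a class.} Because the facts of a single $Y$-class are interchangeable with respect to violations (each conflicts with precisely the same external facts), a trivial exchange argument shows that, once the number $k$ of facts to keep from $C_i$ is fixed, the deletion cost is minimized by keeping the $k$ heaviest facts of $C_i$. I therefore precompute, by sorting and prefix sums, the function $k \mapsto d_i(k)$ giving the minimum deletion cost when exactly $k$ facts of $C_i$ are retained. The remaining task in a block is to choose counts $(n_1,\dots,n_m)$ minimizing $\sum_i d_i(n_i) + w_\varphi \sum_{i<j} n_i n_j$, and there are exponentially many such vectors.

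\textbf{Dynamic program.} The observation that tames the quadratic term is that if the classes are processed one at a time while maintaining only the running total $t$ of facts kept so far, then keeping $k$ facts from the next class adds exactly $w_\varphi\cdot k\cdot t$ to the violation cost, since each newly kept fact conflicts with all $t$ previously kept facts (all of which lie in other classes). I would run a DP with state $(i,t)$ meaning ``classes $C_1,\dots,C_i$ processed, $t$ facts kept,'' value the minimum cost incurred so far, and transition from $(i,t)$ to $(i+1,\,t+k)$ at cost $d_{i+1}(k) + w_\varphi\cdot k\cdot t$ for each $0\le k\le |C_{i+1}|$; the block optimum is $\min_t \mathrm{value}(m,t)$. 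Since $t$ is bounded by the block size and $\sum_i |C_i|$ equals that size, the DP runs in time polynomial in the block size, and summing over blocks gives total time polynomial in $|D|$.

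\textbf{Main obstacle.} The crux is the genuinely quadratic violation term $\sum_{i<j} n_i n_j$, which at first sight forces us to remember the entire count vector. The linearization into the increment $w_\varphi\cdot k\cdot t$, achieved by carrying only the scalar total $t$, is what makes the DP polynomial; the one point needing care is arguing that the cost-to-go depends on the history only through $t$. The within-class interchangeability that justifies the ``keep the heaviest $k$'' step is the ingredient that distinguishes this setting from the key-constraint case of De Sa et al., where each block degenerates into singleton $Y$-classes and no such reduction is needed.
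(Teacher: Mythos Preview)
Your proposal is correct and is essentially the same as the paper's proof: the paper also decomposes into blocks by the $X$-value, then into \emph{subblocks} (your $Y$-classes), uses the same ``keep the $t$ heaviest'' exchange argument (their $\top(t,D_{i,j})$), and runs the identical DP whose state is the pair (subblock index, total facts kept so far) with the same linearized violation increment. The only differences are cosmetic variable renamings (their $t(k-t)w_\varphi$ term is your $w_\varphi\cdot k\cdot t$ with the roles of $k$ and $t$ swapped).
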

Next, we prove Theorem~\ref{thm:onefd} by presenting an
algorithm. Later, we also generalize the argument and result beyond a
single FD (Theorem~\ref{thm:seq-split}).

We assume that the single FD is $\varphi\eqdef X\rightarrow Y$and that our input database is $D$.  We split $D$ into
\e{blocks} and \e{subblocks}, as we explain next.  The blocks of $D$
are the maximal subsets of $D$ that agree on the $X$ values. Denote
these blocks by $D_1,\dots,D_m$.  Note that there are no conflicts
across blocks; hence, we can solve the problem separately for each
block and then an optimal subset $E$ is simply the union of optimal
subsets $E_i$ of the blocks $D_i$:
$$E=\bigcup_{i=1}^m E_i$$
The subblocks of a block $D_i$ are the maximal subsets of $D_i$
that agree on the $Y$ values (in addition to the $X$ values).  We
denote these subblocks by $D_{i,1},\dots, D_{i,q_i}$. Note that
two facts from the same subblock are consistent, while two facts from
different subblocks are conflicting.

From here we continue with dynamic programming. For a number
$j\in\{0,\dots,q_i\}$, where $q_i$ is the number of subblocks of
$D_i$, and a number $k\in\{0,\dots,|D_{i,1}\cup\dots\cup
D_{i,j}|\}$ of facts, we
define the following values that we are going to compute:
\begin{itemize}
\item $C[i,j,k]$ is the cost of an optimal subset of
  $D_{i,1}\cup\dots\cup D_{i,j}$ (i.e., the union of the first
  $j$ subblocks) with precisely $k$ facts.
\item $F[i,j,k]$ is a subset of  $D_{i,1}\cup\dots\cup
  D_{i,j}$  that realizes $C[i,j,k]$, that is,
 $$|F[i,j,k]|=k\quad\land\quad\dbcost\left(F[i,j,k]\mid D_{i,1}\cup\dots\cup D_{i,j}\right)=C[i,j,k]$$
\end{itemize}
(If multiple choices of $F[i,j,k]$ exist, we select an arbitrary one.)
Once we compute the $F[i,q_i,k]$, we are done since it then suffices
to return the best subset over all $k$:
$$E_i=F[i,q_i,k] \mbox{ for } k=\argmin_k{C[i,q_i,k]}$$

It remains to compute $C[i,j,k]$ and $F[i,j,k]$. We will focus
on the former, as the latter is obtained by straightforward
bookkeeping.  The key observation is that if we decide to delete $t$
facts from $D_{i,j}$, then we always prefer to delete the $t$
facts with the minimal weight. We use this observation as follows.

For a subblock $D_{i,j}$ and $t\in\{0,\dots,|D_{i,j}|\}$, denote by
$\top(t, D_{i,j})$ an arbitrary subset of $D_{i,j}$ with $t$
facts of the highest weight. 
Hence, $\top(t, D_{i,j})$ is obtained
by taking a prefix of size $t$ when sorting the tuples of $D_{i,j}$
from the heaviest to the lightest. Then $C[i,j,k]$ is computed as
follows.

\[
C[i,j,k]=
\begin{cases}
0 & \mbox{$j=0$ and $k=0$};\\
\infty & \mbox{$j=0$ and $k>0$};\\
\underset{t}{\min}\Big(C[i,j-1,k-t]+
 t (k-t) w_\varphi+
\underset{\substack{f\in
    D_{i,j}\setminus \\ \top(t, D_{i,j} )}}{\sum}w_f\Big) & \mbox{otherwise.}
%
\end{cases}
\]
The correctness of the above computation is due to the definition of
the cost in Equation~\eqref{eq:cost}. In particular, in the third case, we go over all options for the number $t$ of facts taken from the subblock $D_{i,j}$ and choose an option with the minimum cost. This cost consists of the following components:
\begin{itemize}
  \item $C[i,j-1,k-t]$ is the cost of the best choice of $k-t$
    facts from the remaining $j-1$ subblocks.
  \item $t (k-t) w_\varphi$ is the cost of the violations in which the $j$th subblock participates: any combination of a fact from $D_{i,j}$ and a fact from the other subblocks is a violation of $\varphi$.
    \item $\sum_{f\in
    D_{i,j}\setminus \top(t, D_{i,j} )}w_f$ is the cost of removing every fact that is not in $\top(t, D_{i,j})$ from the $j$th subblock.
\end{itemize}
This completes the description of the algorithm. From this
description, the correctness should be a straightforward conclusion.

\subsection{Extension}
In this section, we generalize the idea from the previous section.  An
attribute $A$ is an \e{lhs attribute} of an FD $X\rightarrow Y$ if
$A\in X$, and it is a \e{consensus attribute} of $X\rightarrow Y$ if
$X=\emptyset$ and $A\in Y$ (hence, $X\rightarrow Y$ states that all
tuples should have the same $A$ value). The simplification step of
Algorithm~\ref{alg:our-simplify} removes an attribute $A$  if for every FD in $\depset$, it is either an lhs or a consensus attribute.
We prove the following.

\eat{ We say that an attribute $A$ is \e{splittable} w.r.t.~an FD set
  $\depset$ if every FD $X\rightarrow Y$ in $\depset$ meets one of the
  following conditions: \e{(1)} $A\in X$, or \e{(2)} $A\in Y$ and
  $X=\emptyset$. Similarly to Livshits et
  al.~\cite{DBLP:journals/tods/LivshitsKR20}, we introduce a recursive
  procedure (depicted as Algorithm~\ref{alg:our-simplify}) that
  simplifies the FD set $\depset$ at each iteration until $\depset$ is
  either empty or cannot be further simplified. Here, we simplify an
  FD set by removing a splittable attribute $A$ from all the FDs.  }


\begin{algorithm}[t]
\begin{algorithmic}[1]
\STATE remove trivial FDs from $\Delta$
\IF{$\depset$ is not empty}
\STATE find $A$ such that in each FD, $A$ is either  an lhs or  a consensus attribute
\STATE $\Delta\assn\Delta-A$
\ENDIF
\end{algorithmic}
\caption{\algname{L/C-Simplify()}\label{alg:our-simplify}}
\end{algorithm}

\begin{theorem}\label{thm:seq-split}
  Let $\Delta$ be a set of FDs. If $\Delta$ can be emptied via
  \algname{L/C-Simplify()} steps, then soft repairing for $\Delta$ is
  solvable in polynomial time.
\end{theorem}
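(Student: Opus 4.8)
The plan is to generalize the dynamic program behind Theorem~\ref{thm:onefd} into a recursion that follows the attribute-removal order of \algname{L/C-Simplify}. The single-FD algorithm already computes, for each block, the cost of an optimal subset \emph{of every possible size}, and this size-indexed cost profile is exactly what I would carry through the recursion. Accordingly, for a database $D$ and FD set $\Delta$ I define $\mathrm{OPT}_\Delta(D,k)$ to be the minimum of $\dbcost(E\mid D)$ over all $E\subseteq D$ with exactly $k$ facts; soft repairing then returns $\min_k \mathrm{OPT}_\Delta(D,k)$, and the recursion computes the whole vector $\big(\mathrm{OPT}_\Delta(D,k)\big)_{k=0}^{|D|}$. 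Since trivial FDs contribute no violations, they are dropped (cost-free) at each step, as in \algname{L/C-Simplify}.

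Fix an attribute $A$ that \algname{L/C-Simplify} removes first, so that in every (nontrivial) FD of $\Delta$ the attribute $A$ is either an lhs or a consensus attribute; write $\Delta_{\mathrm{con}}$ for the FDs in which $A$ is a consensus attribute and $W_{\mathrm{con}}=\sum_{\varphi\in\Delta_{\mathrm{con}}}w_\varphi$. I would partition $D$ into the classes $D^{v_1},\dots,D^{v_p}$ according to the value of $A$. The structural heart of the argument is a decomposition of the cost along this partition, proved by a direct case analysis on a pair $\{f,g\}$: (i) if $f,g$ lie in different classes they disagree on $A$, hence they violate every FD of $\Delta_{\mathrm{con}}$ (each has $A$ on its right-hand side and an empty left-hand side) and no other FD (an lhs FD requires agreement on $A$), contributing exactly $W_{\mathrm{con}}$; and (ii) if $f,g$ lie in the same class they already agree on $A$, so $\{f,g\}$ violates an FD $\varphi$ of $\Delta$ precisely when it violates the corresponding FD $\varphi-A$ of $\Delta-A$. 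Writing $E=\bigcup_i E_i$ with $E_i\subseteq D^{v_i}$ and $k_i=|E_i|$, this yields
\[
  \dbcost_{\Delta}(E\mid D)=\sum_{i=1}^{p}\dbcost_{\Delta-A}(E_i\mid D^{v_i})
  \;+\;W_{\mathrm{con}}\!\!\sum_{1\le i<i'\le p}\!\!k_i k_{i'}.
\]

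With this in hand, the within-class terms are exactly the profiles $C_i[k]\eqdef\mathrm{OPT}_{\Delta-A}(D^{v_i},k)$, which the recursion supplies for the strictly simpler FD set $\Delta-A$, while the cross-class term depends only on the sizes $k_i$. I would then combine the $C_i$ by a knapsack-style dynamic program mirroring the one in Theorem~\ref{thm:onefd}: letting $G[i,s]$ be the minimum cost over subsets of the first $i$ classes totalling $s$ facts,
\[
  G[i,s]=\min_{0\le k_i\le |D^{v_i}|}\Big(G[i-1,s-k_i]+C_i[k_i]+W_{\mathrm{con}}\,k_i\,(s-k_i)\Big),
\]
with $G[0,0]=0$ and $G[0,s]=\infty$ for $s>0$, where the term $W_{\mathrm{con}}\,k_i\,(s-k_i)$ accounts for the new cross-class pairs between class $i$ and the first $i-1$ classes; then $\mathrm{OPT}_\Delta(D,k)=G[p,k]$. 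The base case $\Delta=\emptyset$ (reached once all attributes are removed) has no violations, so $\mathrm{OPT}_\emptyset(D,k)$ is obtained by deleting the $|D|-k$ lightest facts, computed by a single sort.

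The step I expect to be the main obstacle is the cost decomposition above: it is what licenses replacing $\Delta$ by the strictly smaller $\Delta-A$ inside each class while isolating \emph{all} interaction between classes into the single size-dependent term $W_{\mathrm{con}}\sum_{i<i'}k_ik_{i'}$, and it hinges on the lhs/consensus dichotomy forcing cross-class violations to be the same fixed multiset for every pair. Once it is established, correctness follows by induction on the number of simplification steps, which exists and empties $\Delta$ by hypothesis and remains available for $\Delta-A$ because the same removal sequence applies identically to every class. For the running time I would observe that the recursion depth equals the number of removed attributes, a constant since the schema is fixed; at each level the sub-databases partition $D$, so the convolution cost of $O(n^2)$ per node sums to $O(|D|^2)$ per level, giving an overall polynomial (indeed roughly $O(|D|^2)$, plus an initial sort) bound.
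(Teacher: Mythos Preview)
Your proposal is correct and follows essentially the same approach as the paper: the same cost decomposition via the lhs/consensus dichotomy of $A$ (cross-class pairs contribute a fixed $W_{\mathrm{con}}$, within-class pairs reduce to $\Delta-A$), the same size-indexed knapsack recursion $G[i,s]=\min_{k_i}\big(G[i-1,s-k_i]+C_i[k_i]+W_{\mathrm{con}}\,k_i(s-k_i)\big)$, and the same base case of keeping the $k$ heaviest facts when $\Delta=\emptyset$. The only cosmetic difference is that the paper flattens your recursion into a single table $C[\ell,\tau,j,k]$ indexed by the level $\ell$ and the assignment $\tau$ to already-removed attributes, whereas you phrase it as recursive calls returning cost profiles; the two are equivalent.
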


Note that whenever $\Delta$ can be emptied via
\algname{L/C-Simplify()} steps, it can also be emptied via
\algname{Simplify()} steps. Indeed, if \algname{L/C-Simplify()}
eliminates the attribute $A$, then we can take: \e{(a)} $X=\set{A}$
and $Y=\emptyset$ in Algorithm~\ref{alg:simplify} if $A$ is a
consensus attribute of some FD, \e{or (b)} $X=Y=\set{A}$ if $A$ is an
lhs attribute of every FD.  This is expected due to
Theorems~\ref{thm:livshits} and~\ref{thm:seq-split}, and the
observation of Section~\ref{sec:complexity} that soft-repairing is
hard whenever computing a cardinality repair is hard.

\begin{example}
Consider the database and the FD set of our running example (Example~\ref{ex:running}). This FD set, which we denote here by $\depset_1$, can be emptied via \algname{L/C-Simplify()} steps, by selecting attributes in the following order:
\begin{align*}
   &\set{\att{Flight}\ra\att{Airline}\,,\,\att{Flight Airline Date}\ra\att{Destination}} \\
   \att{Flight}: &\set{\emptyset\ra\att{Airline}\,,\,\att{Airline Date}\ra\att{Destination}}\\
   \att{Airline}: &\set{\att{Date}\ra\att{Destination}}\\
   \att{Date}: &\set{\emptyset\ra\att{Destination}}\\
   \att{Destination}: &\set{}\\
        \end{align*}
Hence, Theorem~\ref{thm:seq-split} implies that soft repairing can be solved in polynomial time for $\depset_1$.

Next, consider the FD set $\depset_2$ consisting of the following FDs: $\att{Flight}\ra\att{Airline}$ and $\att{Flight Date}\ra\att{Destination}$.
This FD set is logically equivalent to $\depset_1$; hence, they both entail the exact same cardinality repairs. However, these sets are no longer equivalent when considering soft repairing. In particular, two facts that agree on the values of the $\att{Flight}$ and $\att{Date}$ attributes, but disagree on the values of the $\att{Airline}$ and $\att{Destination}$ attributes, violate only one FD in $\depset_1$ but two FDs in $\depset_2$, which affects the cost of keeping these two tuples in the database. In fact, the FD set $\depset_2$ cannot be emptied via \algname{L/C-Simplify()} steps, as after removing the $\att{Flight}$ attribute, no other attribute is either an lhs or a consensus attribute of the remaining FDs. The complexity of soft repairing for $\depset_2$ remains an open problem.\qed
\end{example}

Next, we prove Theorem~\ref{thm:seq-split} by presenting a
polynomial-time algorithm for soft repairing in the case where
$\Delta$ can be emptied via \algname{L/C-Simplify()} steps.  Our
algorithm generalizes the idea of the algorithm for a single FD, and
we again use dynamic programming.

The main observation is as follows. Let $A$ be an attribute chosen by \algname{L/C-Simplify()}, and let
$D_1,\dots,D_m$ be the maximal subsets of $D$ that agree on the value
of $A$, which we refer to as blocks (w.r.t.~$A$). Two facts from
different blocks violate \e{all} of the FDs wherein $A$ is a consensus
attribute and \e{none} of the FDs wherein  $A$ is an
lhs attribute.
Therefore, to compute the cost of a soft repair, each pair of facts from different blocks is charged with the violation of all FDs wherein $A$ is a consensus attribute. Then, we can remove $A$ from all FDs and continue the computation separately for each block.

\eat{
\begin{itemize}
\item If $A$ is an lhs attribute of every FD (and a consensus
  attribute of none of the FDs), then facts from different subsets
  $D_i$ \e{never} jointly violate an FD of $\depset$ and we can solve the
  problem separately for each one of the subsets, as we have done in
  the case of a single FD.
\item Otherwise, $A$ is a consensus attribute of at least one FD, and two
  facts from different subsets always jointly violate every FD of the
  form $\emptyset\rightarrow Y$ such that $A\in Y$, but none of the
  other FDs of $\depset$; hence, the cost of a violation among facts
  from different subsets is constant.
\end{itemize}
}
  
Now, let $\depset$ be an FD set that can be emptied via
\algname{L/C-Simplify()} steps, and let $A_1,\dots,A_n$ be the
attributes in the order of such an elimination process. For each
$\ell\in\set{1,\dots,n+1}$, we denote by $\depset_\ell$ the FD set in
line~2 of the $\ell$th iteration of this execution (after removing the trivial FDs). Thus, $\depset_1$
contains every non-trivial FD of $\depset$,
and $\depset_{n+1}$ is empty.
We also denote by
$w_\ell$ the total weight of the FDs in
$\depset_{\ell}$ of which $A$ is a consensus attribute (if there are no such FDs, then $w_\ell=0$).  

In the algorithm for a single FD, the recursion steps were with
respect to the block $D_i$ (which determines the value of $X$), and so
the value of $i$ was a parameter. Here, we need to maintain the
assignment $\tau$ to all previously handled attributes, and we use
$\tau$
and $\ell$ as parameters.
Given $1\leq \ell \leq n+1$,
if $\tau$ is an assignment to the
attributes $A_1,\dots,A_{\ell-1}$, then $D^{\tau}$
denotes the database
$\sigma_{\tau}D$ (i.e., the database that contains all the tuples that
agree with $\tau$ on the values of the attributes
$A_1,\dots,A_{\ell-1}$).
We denote by $D^{\tau}_{1},\dots,D^{\tau}_{q_\ell^\tau}$ the blocks of $D^{\tau}$ w.r.t.~$A_\ell$. Moreover, we denote by $\tau\wedge (A_{\ell}=j)$ the assignment to the
attributes $A_1,\dots,A_{\ell}$ that agrees with block $D^{\tau}_{j}$ on the value assigned to $A_\ell$ and agrees with $\tau$ on all other values.
We denote by $F[\ell,\tau,j,k]$ an optimal
subset of $D^\tau_{1}\cup\dots\cup D^\tau_{j}$ of size $k$
w.r.t.~$\depset_{\ell}$.
We also denote by $C[\ell,\tau,j,k]$ the cost of
$F[\ell,\tau,j,k]$. 
According to Equation~\eqref{eq:cost}, our goal is
to compute $F[1,\emptyset,q_1^\emptyset,k]$ for
$k=\argmin_k{C[1,\emptyset,q_1^\emptyset,k]}$.

\eat{
Next, for every $\ell\in\set{1,\dots,n}$, we denote by
$D_{\ell,1},\dots,D_{\ell,q_\ell}$ the maximal subsets of the database
$D$ that agree on the value of attribute $A_\ell$, and by $A_\ell[j]$
the value of this attribute in $D_{\ell,j}$. As in the case of
a single FD, we denote by $C[\ell,\tau,j,k]$ the cost of an optimal
subset of $D^\tau_{\ell,1}\cup\dots\cup D^\tau_{\ell,j}$ of size $k$
w.r.t.~$\depset_{\ell}$, where
$\tau=(A_1=a_1\wedge\dots\wedge A_{\ell-1}=a_{\ell-1})$ is an assignment to the
attributes $A_1,\dots,A_{\ell-1}$, and $D^\tau$ is the database
$\sigma_{\tau}D$ (i.e., the database that contains all the tuples that
agree with $\tau$ on the values of the attributes
$A_1,\dots,A_{\ell-1}$). We denote by $q_\ell^\tau$ the number of maximal subsets of $D^\tau$ w.r.t.~$A_\ell$.
We also denote by $F[\ell,\tau,j,k]$ a subset
of $D^\tau_{\ell,1}\cup\dots\cup D^\tau_{\ell,j}$ that realizes
$C[\ell,\tau,j,k]$. According to Equation~\eqref{eq:cost}, our goal is
to compute $F[1,\emptyset,q_1^\emptyset,k]$ for
$k=\argmin_k{C[1,\emptyset,q_1^\emptyset,k]}$.
}


We again focus on the computation of $C[\ell,\tau,j,k]$ that can be
done as follows.

\[
C[\ell,\tau,j,k]=
\begin{cases}
\underset{\substack{f\in
    D^\tau\setminus\top(k, D^\tau )}}{\sum}w_f & \ell=n+1,\\
0 & j=0, k=0,\\
\infty & j=0, k>0,\\
\parbox[t]{.5\textwidth}{
$\underset{t}{\min}\Big(C[\ell,\tau,j-1,k-t]+
 t (k-t) w_\ell+\\
\mbox{~~~~~~~} C[\ell+1,\tau\wedge (A_\ell=j),q_{\ell+1}^{\tau\wedge( A_\ell=j)},t]
\Big)$} & \mbox{otherwise}.
\end{cases}
\]

The first line (where $\ell=n+1$) refers to the case where $\depset$ is empty. Since there are no FDs that need to be taken into account, the optimal subset of $D^\tau$ of size $k$ consists of the $k$ facts of the highest weight.
In the fourth case, we go over all options for the number $t$ of facts taken from the block $D^\tau_{j}$ and choose an option with the minimum cost. This cost consists of the following components:
\begin{itemize}
  \item $C[\ell,\tau,j-1,k-t]$ is the cost of the best choice of $k-t$ facts from the remaining $j-1$ blocks.
  \item $t (k-t) w_\ell$ is the cost of the violations in which the $j$th block participates: any combination of a fact from $D^\tau_{j}$ and a fact from the other blocks $D^\tau_{1}\cup\dots\cup D^\tau_{j-1}$ is a violation of the FDs in which $A$ is a consensus attribute.
  \item $C[\ell+1,\tau\wedge (A_\ell=j),q_{\ell+1}^{\tau\wedge( A_\ell=j)},t]$ is the cost of the further repairing needed following the elimination of $A_\ell$ (i.e., repairing with respect to $\depset_{\ell+1}$) applied to the current block (the $t$ facts from $D^\tau_j$) .
  \end{itemize}

The given recursion can be computed in polynomial time via dynamic programming; thus, this proves Theorem~\ref{thm:seq-split}.

\section{Algorithm for Matching Constraints}\label{sec:matching}

Next, we consider the case of a ``matching'' constraint, where the FD set $\Delta$ states two keys that cover all of the attributes. (We give the precise definition in Section~\ref{sec:matching-generalization}.)  We present a polynomial-time algorithm for soft repairing in this case. For presentation sake, we first describe the algorithm for the special case where the schema is $R(A,B)$ and $\Delta\eqdef\set{A\rightarrow B,B\rightarrow A}$.  Later in the section, we generalize it to the case of two keys.  So, we begin by proving the following lemma.

\begin{lemma}\label{lemma:ABA} Soft repairing is solvable in polynomial time for $R(A,B)$ and $\Delta=\set{A\rightarrow B,B\rightarrow A}$.  \end{lemma}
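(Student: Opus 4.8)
The plan is to reduce soft repairing to Minimum Cost Maximum Flow (MCMF), following the hint in the introduction. First I would view the database $D$ as a bipartite (multi)graph $G$: the left vertices are the distinct $A$-values occurring in $D$, the right vertices are the distinct $B$-values, and each fact $f=R(a,b)$ is an edge joining $a$ to $b$ carrying a benefit $w_f$. The first thing to establish is a clean combinatorial reformulation of the cost. Since $D$ is a set, two distinct facts sharing their $A$-value must differ on $B$, so any two kept edges incident to the same left vertex form exactly one violation of $A\to B$ (and none of $B\to A$); symmetrically for right vertices. Hence, writing $d_a(E)$ and $d_b(E)$ for the number of kept edges at a left vertex $a$ and a right vertex $b$, and $w_1,w_2$ for the weights of $A\to B$ and $B\to A$,
\[
\dbcost(E\mid D)=\sum_{f\in D\setminus E}w_f + w_1\sum_{a}\binom{d_a(E)}{2}+w_2\sum_{b}\binom{d_b(E)}{2}.
\]
Because $\sum_{f\in D}w_f$ is a constant, minimizing this is equivalent to maximizing $\sum_{f\in E}w_f$ minus the two convex congestion terms.

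The crux is that $\binom{d}{2}$ is convex in the integer $d$, with increments $\binom{j}{2}-\binom{j-1}{2}=j-1$, and that the FD weights are nonnegative. I would exploit this with the standard convex-cost gadget. Build a flow network with source $s$ and sink $t$; for each left vertex $a$ insert $n_a$ parallel unit-capacity arcs $s\to L_a$ (where $n_a$ is the number of facts of $D$ incident to $a$) with costs $0\cdot w_1, 1\cdot w_1,\dots,(n_a-1)w_1$; symmetrically insert $n_b$ parallel unit-capacity arcs $R_b\to t$ with costs $0\cdot w_2,\dots,(n_b-1)w_2$; and for each fact $f=(a,b)$ insert a unit-capacity arc $L_a\to R_b$ of cost $-w_f$. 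Since the parallel-arc costs are increasing, an integral flow routing $d_a$ units into $L_a$ optimally fills its $d_a$ cheapest arcs, incurring cost exactly $w_1\binom{d_a}{2}$, and likewise $w_2\binom{d_b}{2}$ at $R_b$, while the middle arcs contribute $-\sum_{f\in E}w_f$, where $E$ is the set of facts whose middle arc carries flow. Thus a flow of cost $c$ corresponds to a subset with $\dbcost(E\mid D)=c+\sum_{f\in D}w_f$.

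The key steps, in order, would be: \emph{(i)} verify the reformulation of $\dbcost$ via the violation-counting observation above; \emph{(ii)} prove the cost-preserving correspondence between integral flows and subsets $E\subseteq D$, using that min-cost flow on an acyclic network admits an integral optimum and that, by convexity, an optimum fills the cheapest parallel arcs first; \emph{(iii)} since the network is a DAG there are no negative cycles, so a minimum-cost flow of \emph{unrestricted} value is well defined, and I would compute it either by successive shortest augmenting paths—augmenting while the shortest $s$--$t$ path has negative cost, which terminates correctly because the per-unit cost is nondecreasing by convexity—or by adding a zero-cost arc $t\to s$ and computing a minimum-cost circulation; both are polynomial by the MCMF machinery of~\cite{10.1287/moor.15.3.430}; and \emph{(iv)} decompose the optimal flow into $s$--$L_a$--$R_b$--$t$ paths and output the facts carrying flow as the optimal subset.

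The main obstacle is the correctness of the gadget rather than its running time: I must argue that an optimal flow never wastes an expensive parallel arc at a vertex while a cheaper one there is idle, so that the flow cost genuinely equals $w_1\binom{d_a}{2}+w_2\binom{d_b}{2}-\sum_{f\in E}w_f$. This is exactly where nonnegativity of the FD weights, hence convexity of the congestion cost, is used; with negative weights the reduction would break. A secondary point needing care is that we seek a min-cost flow of \emph{any} value rather than a maximum flow, which I handle via the stopping rule above or the circulation reformulation, both justified by convexity of the optimal-cost-versus-value curve.
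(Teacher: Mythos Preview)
Your proposal is correct and follows essentially the same route as the paper: both reformulate $\dbcost$ via the $\binom{d_a}{2},\binom{d_b}{2}$ violation counts and reduce to MCMF using the identical convex-cost gadget (parallel unit arcs of costs $0,\,w_\varphi,\,2w_\varphi,\dots$ at each $A$- and $B$-value, negative-cost middle arcs for the facts), with the same optimality argument that the cheapest parallel arcs are filled first. The only technical difference is in handling the unconstrained flow value: the paper adds an $(s,s')$ arc of capacity $k$ and solves $n$ separate MCMF instances $\N_1,\dots,\N_n$, taking the minimum over $k$, whereas you solve a single instance via min-cost circulation (or SSP stopped at the first nonnegative shortest path). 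Your variant is slightly more economical, but the reduction and its correctness proof are the same.
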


\begin{figure}[t]
\centering
  \begin{subfigure}[]{0.3\linewidth}
  \centering
        \begin{tabular}{c||c|c}
        & $A$ & $B$\\\hline
        $f_1$ & $a_1$ & $b_1$ \\
        $f_2$ & $a_1$ & $b_2$ \\
        $f_3$ & $a_1$ & $b_3$ \\
        $f_4$ & $a_2$ & $b_1$ \\
        $f_5$ & $a_2$ & $b_2$\\
        $f_6$ & $a_3$ & $b_3$
    \end{tabular}
    \caption{Database\label{fig:database}}
  \end{subfigure}
\begin{subfigure}[]{0.5\linewidth}
\centering
  \input{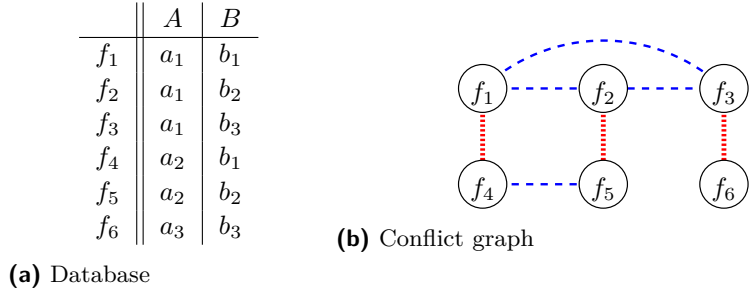}
  \caption{Conflict graph\label{fig:conflict}}
  \end{subfigure}
  \caption{A database over $R(A,B)$ and its conflict graph w.r.t.~$\set{A\rightarrow B, B\rightarrow A}$.}
\end{figure}

In the remainder of this section, we assume the input $D$ over $R(A,B)$.
We begin with an observation.
For $E\subseteq D$ it holds that:
$$\sum_{f\in(D\setminus E)}w_f=\sum_{f\in D}w_f-\sum_{f\in E}w_f$$
Since the value $\sum_{f\in D}w_f$ does not depend on the choice of $E$, minimizing the value $\left(\sum_{f\in (D\setminus E)}\!\!w_f\right)
  + \left(\sum_{\varphi\in\Delta}w_\varphi|\vio(E,\varphi)|\right)$ is the same as minimizing the value $\left(\sum_{f\in E}\!\!-w_f\right)
  + \left(\sum_{\varphi\in\Delta}w_\varphi|\vio(E,\varphi)|\right)$. We use the following notation:
$$w_D(E)=\left(\sum_{f\in E}\!\!-w_f\right)
  + \left(\sum_{\varphi\in\Delta}w_\varphi|\vio(E,\varphi)|\right)$$

To solve the problem, we construct a reduction to the \e{Minimum Cost
  Maximum Flow} (MCMF) problem.
The input to MCMF is a flow network $\N$, that is, a directed graph $(V,E)$ with a \e{source} node $s$ having no incoming edges and a \e{sink} node $t$ having no outgoing edges. Each edge $e\in E$ is associated with a \e{capacity} $c_e$ and a \e{cost} $c(e)$. A flow
$f$ of $\N$ is a function $f:E\rightarrow\mathbb{R}$ such that
$0\le f(e)\le
c_e$ for every $e\in E$, and moreover, for every node $v\in V\setminus\set{s,t}$
it holds that $\sum_{e\in I_v} f(e)=\sum_{e\in O_v} f(e)$ where $I_v$ and $O_v$ are the sets of incoming and outgoing edges of $v$, respectively.  A \e{maximum flow} is a flow $f$ that maximizes the value $\sum_{(s,v)\in E} f(s,v)$, and a \e{minimum cost maximum flow} is a maximum flow $f$ with a minimal cost, where the cost of a flow is defined by $\sum_{e\in E}f(e)\cdot c(e)$.  We say that $f$ is \e{integral} if all values $f(e)$ are integers.  It is known that, whenever the capacities are integral (i.e., natural numbers, as will be in our case), an integral minimum cost maximum flow exists and, moreover, can be found in polynomial time~\cite[Chapter 9]{DBLP:books/daglib/0069809}.


From $D$ we construct $n$ instances $\N_1,\dots,\N_n$ of the MCMF problem, where $n$ is the number of facts in $D$, in the following way.

First, we denote the FD $A\rightarrow B$ by $\varphi_1$ and the FD $B\rightarrow A$ by $\varphi_2$. We also denote by $D.A$ the set of values occurring in attribute $A$ in $D$ (that is, $D.A=\set{\val{a}\mid \exists f\in D (f[A]=\val{a})}$). We do the same for attribute $B$ and denote by $D.B$ the set of values that occur in attribute $B$ in $D$. For each value $\val{a}\in D.A$ we denote by $\#_{D.A}(\val{a})$ the number of appearances of the value $\val{a}$ in attribute $A$ (i.e., the number of facts $f\in D$ such that $f[A]=\val{a}$). Similarly, we denote by $\#_{D.B}(\val{b})$ the number of appearances of the value $\val{b}$ in attribute $B$ in $D$. Observe that 
$$\vio(D,\varphi_1)=\frac{1}{2}\cdot\sum_{\val{a}\in D.A}\left[\#_{D.A}(\val{a})\cdot(\#_{D.A}(\val{a}) - 1)\right]$$ 
since every fact of the form $R(\val{a},\val{b})$ violates $\varphi_1$ with every fact $R(\val{a},\val{c})$ where $\val{b}\neq \val{c}$. Similarly, it holds that $$\vio(D,\varphi_2)=\frac{1}{2}\cdot\sum_{\val{b}\in D.B}\left[\#_{D.B}(\val{b})\cdot(\#_{D.B}(\val{b}) - 1)\right]$$

    \begin{figure}[t]
      \centering
  \input{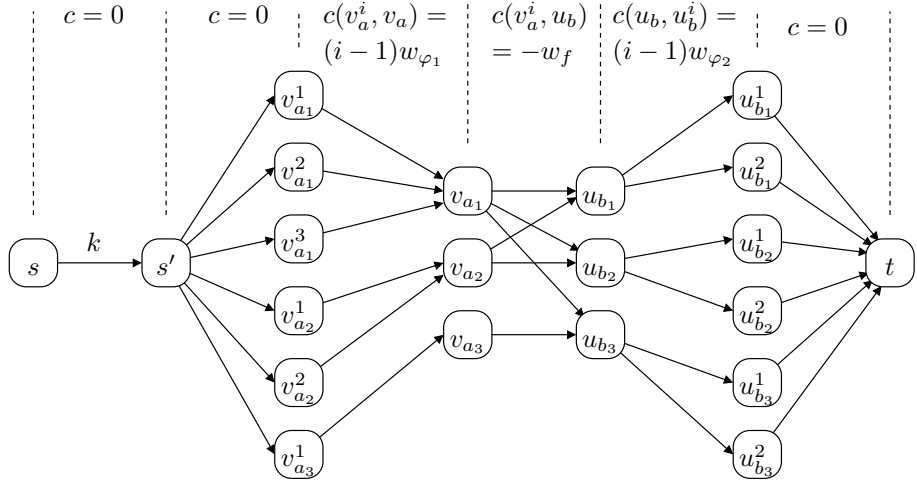}
  \caption{The network $\N_k$ constructed from the database of Figure~\ref{fig:database}. The capacity of all edges is $1$, except for the edge $(s,s')$ that has capacity $k$.\label{fig:network}}
\end{figure}

Next, we describe the construction of the network $\N_k$. Our
construction for the database of Figure~\ref{fig:database} is
illustrated in Figure~\ref{fig:network}. Note that
Figure~\ref{fig:conflict}
depicts the conflict graph of the database of Figure~\ref{fig:database} w.r.t.~$\Delta=\set{A\rightarrow B,B\rightarrow A}$, which contains a vertex for each fact in the database and an edge between two vertices if the corresponding facts jointly violate an FD of $\Delta$. The blue edges in the conflict graph are violations of the FD $A\rightarrow B$ and the red edges are violations of the FD $B\rightarrow A$.

For each $k\in\set{1,\dots,n}$ we construct the network $\N_k$ that consists of the set $\set{s,s',t}\cup V\cup A\cup B\cup U$ of nodes where:
\begin{itemize}
    \item $A=\set{v_\val{a}\mid \val{a}\in D.A}$
    \item $B=\set{u_\val{b}\mid \val{b}\in D.B}$
    \item $V=\set{v_\val{a}^i\mid\val{a}\in D.A, 1\le i\le \#_{D.A}(\val{a})}$
    \item $U=\set{u_\val{b}^i\mid\val{b}\in D.B, 1\le i\le \#_{D.B}(\val{b})}$
\end{itemize}

$\N_k$ contains the following edges:
\begin{itemize}
    \item $(s,s')$, with cost $c(s,s')=0$
    \item $(s',v_\val{a}^i)$ for every $v_\val{a}^i\in V$, with cost $c(s',v_\val{a}^i)=0$
    \item $(v_\val{a}^i,v_\val{a})$ for every value $\val{a}\in D$, with cost $c(v_\val{a}^i,v_\val{a})=(i-1)\cdot w_{\varphi_1}$
    \item $(v_\val{a},u_\val{b})$ for every $\val{a}\in D.A$ and
      $\val{b}\in D.B$ such that $f=R(\val{a},\val{b})$ occurs in $D$,
      with cost $c(v_\val{a},u_\val{b})=-w_f$
    \item $(u_\val{b},u_\val{b}^i)$ for every value $\val{b}\in D$, with cost $c(u_\val{b},u_\val{b}^i)=(i-1)\cdot w_{\varphi_2}$
    \item $(u_\val{b}^i,t)$ for every $u_\val{b}^i\in U$, with cost $c(u_\val{b}^i,t)=0$
    \end{itemize}

The capacity of the edge $(s,s')$ is $k$ and the capacity of the other edges is $1$.
The intuition for the construction is as follows. A network with edges of the form $(v_\val{a},u_\val{b})$ that are connected to a source on one side and a target on the other corresponds to a matching, which in turn corresponds to a traditional repair. To allow violations of $A\rightarrow B$, we add the vertices $v_\val{a}^i$. The cost of a violation of this FD is defined by the cost of the edges $(v_\val{a}^i,v_\val{a})$. In particular, if we keep $k$ facts of the form $R(\val{a},\cdot)$ for some $\val{a}\in D.A$ we pay $\sum_{i=1}^k(k-1)w_{\varphi_1}$ for violations of $\varphi_1$. We include the vertices $v_\val{b}^i$ to similarly allow violations of $B \rightarrow A$. The discarding of facts is discouraged by offering gain for the edges $(v_\val{a},u_\val{b})$. Finally, to prevent the case where the flow always fills the entire network (which corresponds to taking all facts and paying for all violations), we introduce the edge $(s,s')$ which limits the capacity of the network, and enables us to find the minimum cost flow of a given size $k$.
We will show that for every $k$, the cost of the solution to the MCMF problem on $\N_k$ will be the cost of the ``cheapest'' subinstance of $D$ of size $k$. Hence, the solution to our problem is the cost of the minimal solution among all the instances $\N_1,\dots,\N_n$.

Given an integral flow $f$ in $\N_k$, the repair $D[f]$ induced by $f$, is the set of facts $R(\val{a},\val{b})$ corresponding to edges of the form $(v_\val{a},u_\val{b})$ such that $f(v_\val{a},u_\val{b})=1$. Moreover, given a subinstance $E$ of $D$ of size $k$, we denote by $f_E$ the integral flow in $\N_k$ defined as follows.
\begin{itemize}
    \item $f_E(s,s')=k$
    \item $f_E(s',v_\val{a}^i)=1$ for $1\le i\le \#E.A(\val{a})$ and $f_E(s',v_\val{a}^i)=0$ for $i>\#E.A(\val{a})$ for every $\val{a}\in E.A$
    \item $f_E(v_\val{a}^i,v_\val{a})=1$ for $1\le i\le \#E.A(\val{a})$ and $f_E(v_\val{a}^i,v_\val{a})=0$ for $i>\#E.A(\val{a})$ for every $\val{a}\in E.A$
    \item $f_E(v_\val{a},u_\val{b})=1$ if $R(\val{a},\val{b})\in E$ and $f_E(v_\val{a},u_\val{b})=0$ otherwise
    \item $f_E(u_\val{b},u_\val{b}^i)=1$ for $1\le i\le \#E.B(\val{b})$ and $f_E(u_\val{b},u_\val{b}^i)=0$ for $i>\#E.B(\val{b})$ for every $\val{b}\in E.B$
    \item $f_E(u_\val{b}^i,t)=1$ for $1\le i\le \#E.B(\val{b})$ and $f_E(u_\val{b}^i,t)=0$ for $i>\#E.B(\val{b})$ for every $\val{b}\in E.B$
\end{itemize}
The reader can easily verify that $f_E$ is indeed an integral flow in $\N_k$. Clearly, the value of the flow is $k$.

We have the following lemmas. The first is proved in the Appendix and the second follows straightforwardly from the construction of $\N_k$ and the definition of $f_E$.


\def\lemmaMCMFsecond{
Every integral solution $f$ to MCMF on $\N_k$ satisfies $\cost(f)=w_D(f[D])$.
}
\begin{lemma}\label{lemma:fcostw}
\lemmaMCMFsecond
\end{lemma}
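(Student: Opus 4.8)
The plan is to read off $\cost(f)$ one edge-class at a time and match each contribution against one of the three summands of $w_D(f[D])$. Write $E\eqdef f[D]$. First I would note that $\N_k$ is a layered DAG: every edge points forward in the order $s,s',\{v_{\val{a}}^i\},\{v_{\val{a}}\},\{u_{\val{b}}\},\{u_{\val{b}}^i\},t$, and the only edges crossing from the $v$-layer to the $u$-layer are the fact edges $(v_{\val{a}},u_{\val{b}})$. Consequently any integral flow decomposes into $s$--$t$ unit paths (there are no cycles), each path using exactly one fact edge. Since the only edge of capacity larger than $1$ is $(s,s')$ (capacity $k$), and a value-$k$ flow does exist because $D$ has $n\ge k$ facts, each supplying a capacity-$1$ path $s\to s'\to v_{\val{a}}^i\to v_{\val{a}}\to u_{\val{b}}\to u_{\val{b}}^j\to t$, every MCMF solution $f$ is a maximum flow of value exactly $k$ and routes one unit along each of $k$ distinct fact edges. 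Hence $E=f[D]$ is a well-defined subinstance with $|E|=k$.

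Next I would split the cost. Only the edges $(v_{\val{a}}^i,v_{\val{a}})$, $(v_{\val{a}},u_{\val{b}})$ and $(u_{\val{b}},u_{\val{b}}^i)$ carry nonzero cost. The fact edges contribute exactly $\sum_{g\in E}(-w_g)$, matching the first summand of $w_D(E)$. Fix a value $\val{a}$. By flow conservation the inflow into $v_{\val{a}}$ equals its outflow, which is the number $\ell_{\val{a}}\eqdef\#_{E.A}(\val{a})$ of kept facts with $A=\val{a}$; so exactly $\ell_{\val{a}}$ of the parallel edges $(v_{\val{a}}^i,v_{\val{a}})$ carry a unit of flow. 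This is where optimality enters: these $\#_{D.A}(\val{a})$ parallel edges have increasing costs $0,w_{\varphi_1},2w_{\varphi_1},\dots$ and are each fed from $s'$ by a capacity-$1$, cost-$0$ edge, so a minimum-cost flow uses the $\ell_{\val{a}}$ cheapest of them, namely $i=1,\dots,\ell_{\val{a}}$; otherwise rerouting a unit off a used edge $i$ onto an unused cheaper edge $i'<i$ strictly lowers the cost when $w_{\varphi_1}>0$ (and is cost-neutral when $w_{\varphi_1}=0$, in which case the claim is trivial). Their total cost is $\sum_{i=1}^{\ell_{\val{a}}}(i-1)w_{\varphi_1}=w_{\varphi_1}\frac{\ell_{\val{a}}(\ell_{\val{a}}-1)}{2}$. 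Summing over $\val{a}$ and using that any two distinct kept facts sharing an $A$-value necessarily disagree on $B$, this equals $w_{\varphi_1}\cdot\frac{1}{2}\sum_{\val{a}\in E.A}\#_{E.A}(\val{a})(\#_{E.A}(\val{a})-1)=w_{\varphi_1}|\vio(E,\varphi_1)|$, the second summand. The symmetric argument on the edges $(u_{\val{b}},u_{\val{b}}^i)$ yields $w_{\varphi_2}|\vio(E,\varphi_2)|$. Adding the three contributions gives $\cost(f)=\sum_{g\in E}(-w_g)+w_{\varphi_1}|\vio(E,\varphi_1)|+w_{\varphi_2}|\vio(E,\varphi_2)|=w_D(E)=w_D(f[D])$.

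The step I expect to be the crux is the exchange argument that an MCMF solution routes flow through the cheapest parallel edges $i=1,\dots,\ell_{\val{a}}$ (and symmetrically on the $B$-side); without optimality one only obtains the inequality $\cost(f)\ge w_D(f[D])$, which holds for every integral value-$k$ flow. A clean alternative that sidesteps the exchange argument is to combine this inequality with the companion lemma stating $\cost(f_E)=w_D(E)$: for $E=f[D]$ the flow $f_E$ is feasible of value $k$, so optimality of $f$ gives $\cost(f)\le\cost(f_E)=w_D(E)$, and together with $\cost(f)\ge w_D(E)$ we again conclude $\cost(f)=w_D(f[D])$. Either route completes the proof.
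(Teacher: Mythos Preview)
Your proof is correct and follows essentially the same approach as the paper: both argue via an exchange/rerouting step that an optimal integral flow must use the cheapest $\ell_{\val a}$ parallel edges $(v_{\val a}^i,v_{\val a})$ (and symmetrically on the $B$-side), then sum $\sum_{i=1}^{\ell_{\val a}}(i-1)w_{\varphi_1}=\tfrac{1}{2}\ell_{\val a}(\ell_{\val a}-1)w_{\varphi_1}$ over all $\val a$ to recover $w_{\varphi_1}|\vio(E,\varphi_1)|$, and read off $\sum_{g\in E}(-w_g)$ from the fact edges. Your explicit treatment of the degenerate case $w_{\varphi_1}=0$ is slightly more careful than the paper's, and your alternative route via the companion lemma (bounding $\cost(f)\le\cost(f_E)=w_D(E)$ from optimality and $\cost(f)\ge w_D(E)$ from any integral flow) is a clean extra observation, but the core argument is the same.
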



\begin{lemma}\label{lemma:ecostf}
Every subinstance $E$ of $D$ satisfies $\cost(f_E)=w_D(E)$.
\end{lemma}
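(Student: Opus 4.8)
The plan is to evaluate $\cost(f_E)=\sum_e f_E(e)\,c(e)$ directly from the definitions of $f_E$ and of the edge costs, and to match the three resulting terms against the three parts of $w_D(E)$. First I would note that the flow $f_E$ is a legal integral flow of value $k$ (as already observed just before the lemma), so $\cost(f_E)$ is well defined. The key simplification is that most edges carry cost $0$: the edges $(s,s')$, $(s',v_\val{a}^i)$ and $(u_\val{b}^i,t)$ all have cost $0$, so they contribute nothing regardless of their flow. Hence only three families of edges matter, namely $(v_\val{a}^i,v_\val{a})$, $(v_\val{a},u_\val{b})$ and $(u_\val{b},u_\val{b}^i)$, and I would handle each in turn.

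For the middle layer, the edge $(v_\val{a},u_\val{b})$ exists exactly when $f=R(\val{a},\val{b})\in D$, carries flow $1$ precisely when $R(\val{a},\val{b})\in E$, and has cost $-w_f$; summing gives the contribution $\sum_{f\in E}(-w_f)$, which is exactly the first (tuple-retention) term of $w_D(E)$. This is the straightforward part.

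The crux is the two ``staircase'' layers, which I expect to be the only place requiring a small combinatorial argument. Fix a value $\val{a}\in E.A$ and let $m=\#E.A(\val{a})$ be its multiplicity in $E$. By definition of $f_E$, exactly the edges $(v_\val{a}^i,v_\val{a})$ for $1\le i\le m$ carry unit flow, so their total cost is $w_{\varphi_1}\sum_{i=1}^{m}(i-1)=w_{\varphi_1}\cdot\tfrac{m(m-1)}{2}$. The point is that $\tfrac{m(m-1)}{2}=\binom{m}{2}$ is precisely the number of pairs of facts in $E$ that share the value $\val{a}$ on attribute $A$, i.e.\ the number of violations of $\varphi_1$ contributed by this block; summing over all $\val{a}\in E.A$ and invoking the identity $|\vio(E,\varphi_1)|=\tfrac12\sum_{\val{a}\in E.A}\#E.A(\val{a})\bigl(\#E.A(\val{a})-1\bigr)$ derived earlier in the section shows this layer contributes exactly $w_{\varphi_1}|\vio(E,\varphi_1)|$. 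An identical computation on the edges $(u_\val{b},u_\val{b}^i)$, using $\#E.B(\val{b})$ and the analogous identity for $\varphi_2$, yields $w_{\varphi_2}|\vio(E,\varphi_2)|$.

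Adding the three contributions gives $\cost(f_E)=-\sum_{f\in E}w_f+w_{\varphi_1}|\vio(E,\varphi_1)|+w_{\varphi_2}|\vio(E,\varphi_2)|$, which is exactly $w_D(E)$ by its definition, completing the proof. I do not anticipate any genuine obstacle here: the only nontrivial step is recognizing that the arithmetic-progression sum $\sum_{i=1}^{m}(i-1)$ of the staircase edge costs equals the within-block pair count $\binom{m}{2}$, and everything else is bookkeeping over the edge families of $\N_k$.
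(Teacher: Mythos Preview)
Your proposal is correct and is exactly the straightforward computation the paper has in mind; indeed, the paper does not spell out a proof of this lemma at all, stating only that it ``follows straightforwardly from the construction of $\N_k$ and the definition of $f_E$.'' Your write-up simply makes that computation explicit, and the one nontrivial ingredient you identify (the triangular-number identity $\sum_{i=1}^{m}(i-1)=\binom{m}{2}$ matching the within-block violation count) is precisely the same calculation the paper carries out in its proof of the companion Lemma~\ref{lemma:fcostw}.
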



Now, let $E$ be an optimal subset of $D$ w.r.t.~$\Delta$ and assume that $|E|=k$. 
Let $f^*$ be a solution with the minimum cost among all the solutions to MCMF on $\N_1\ldots,\N_n$.
Lemma~\ref{lemma:ecostf} implies that there is an integral flow $f_E$ in $\N_k$ such that $\cost(f_E)=w_D(E)$. Hence, we have that $\cost(f^*)\le w_D(E)$.
By applying Lemma~\ref{lemma:fcostw} on $f^*$, there is another subinstance $E'$ of $D$ such that $w_D(E')=\cost(f^*)$.
Since $E$ is an optimal subset, we have that $w_D(E) \le w_D(E')$.
Overall, we have that $\cost(f^*)\le w_D(E)\le w_D(E')=\cost(f^*)$, and we conclude that $\cost(f^*) = w_D(E)$.
Therefore, by taking the solution with the lowest cost among all solutions to MCMF on $\N_1,\dots,\N_n$, we indeed find a solution to our problem, and that concludes our proof of Lemma~\ref{lemma:ABA}.

\begin{figure}
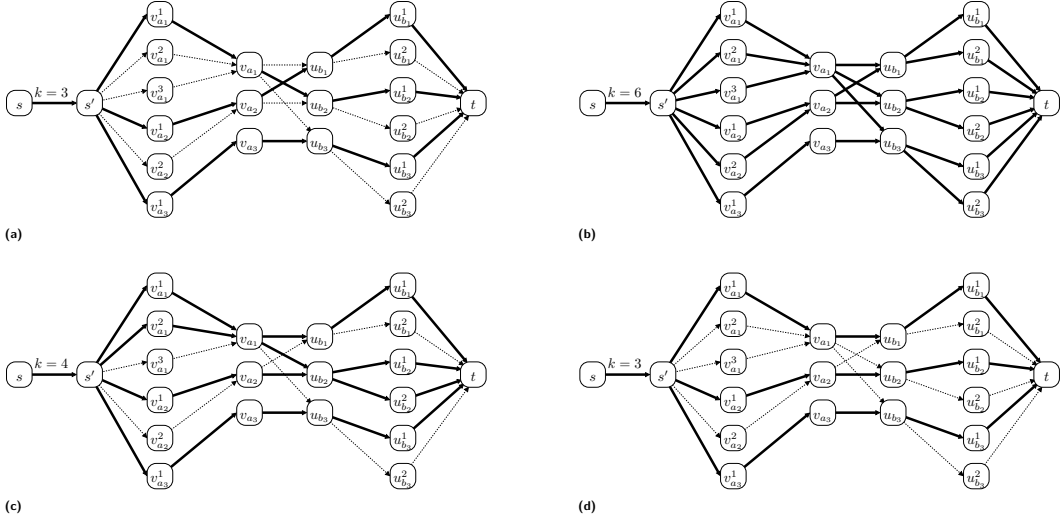

  \centering
    \scalebox{0.53}{
    \begin{subfigure}[]{\linewidth}
        \input{flowA.pspdftex}
        \subcaption{\label{fig:flow1}}
    \end{subfigure}
    }
    \scalebox{0.53}{
    \begin{subfigure}[]{\linewidth}
    \input{flowB.pspdftex}
    \subcaption{\label{fig:flow2}}
    \end{subfigure}
    }    \vskip1em
    \scalebox{0.53}{
    \begin{subfigure}[]{\linewidth}
    \input{flowC.pspdftex}
    \subcaption{\label{fig:flow3}}
    \end{subfigure}
    }
    \scalebox{0.53}{
    \begin{subfigure}[]{\linewidth}
    \input{flowD.pspdftex}
    \subcaption{\label{fig:flow4}}
    \end{subfigure}
    } 
    \caption{The flow in the network $\N_k$ corresponding to an optimal subset of the database of Figure~\ref{fig:database} for different weights.\label{fig:flows}}
\end{figure}

\begin{example}
Consider again the database of Figure~\ref{fig:database}. Assume that:
$$w_{\varphi_1}=w_{\varphi_2}=2 \quad\quad\quad w_{f_1}=w_{f_2}=w_{f_3}=w_{f_4}=w_{f_5}=w_{f_6}=1$$
Since the cost of a violation is ``too high'' in this case (i.e., it is always cheaper to delete a fact involved in a violation than to keep the violation), an optimal subset in this case is, in fact, an optimal repair in the traditional sense (that is, when the constraints are assumed to be hard constraints). One possible optimal repair in this case is $\set{f_2,f_4,f_6}$. The flow corresponding to this repair in the network $\N_3$ is illustrated in Figure~\ref{fig:flow1}.

Now, assume that:
$$w_{\varphi_1}=w_{\varphi_2}=1  \quad\quad\quad
w_{f_1}=w_{f_2}=w_{f_3}=w_{f_4}=w_{f_5}=w_{f_6}=3$$
In this case, the cost of deleting a fact is ``too high'', since each fact is involved in at most two violations, and the cost of keeping the violation is lower than the cost of removing facts involved in the violation. Therefore, the database itself is an optimal subset, and the corresponding flow in the network $\N_6$ is illustrated in Figure~\ref{fig:flow2}.

As another example, assume that:
$$w_{\varphi_1}=w_{\varphi_2}=1  \quad\quad\quad w_{f_1}=w_{f_2}=w_{f_5}=2,w_{f_3}=w_{f_4}=1,w_{f_6}=3$$
Here an optimal subset consists of the facts in $\{f_1,f_2,f_5,f_6\}$, and the corresponding flow in the network $\N_4$ is illustrated in Figure~\ref{fig:flow3}.
If we modify the weight of $\varphi_2$ and define $w_{\varphi_2}=4$, while keeping the rest of the weight intact, it is now cheaper to delete the fact $f_2$ rather than keep the violations it is involved in with $f_1$ and $f_5$; hence, an optimal subset in this case is $\{f_1,f_5,f_6\}$, and the corresponding flow in the network $\N_3$ is illustrated in Figure~\ref{fig:flow4}.\qed
\end{example}

Note that the FD set $\set{A\rightarrow B}$ over $R(A,B)$ is in fact a special case of the result of Theorem~\ref{thm:matching}, as we can compute an optimal subset for this FD set using the algorithm described above by defining $w_{B\rightarrow A}=0$. However, this algorithm works only for the case where the single FD is a key and fails to compute the correct solution when the schema contains attributes that do not appear in the FD. The algorithm described in the proof of Theorem~\ref{thm:onefd}, on the other hand, can handle this case and does not assume anything about the underlying schema.

\subsection{Generalization into Matching Constraints}\label{sec:matching-generalization}
By a ``matching constraint'' we refer to the case of $\hat\Delta=\set{X\rightarrow Y,X'\rightarrow Y'}$ over a schema $\hat R(A_1,\dots,A_k)$ where $X\cup Y=X'\cup Y'=X\cup X'=\set{A_1,\dots,A_k}$. An example follows.
\begin{example}
Consider the database of our running example (Figure~\ref{fig:DB}), and the following FDs:
\begin{itemize}
    \item $\att{Flight Airline Date}\ra\att{Origin Destination Airplane}$,
    \item $\att{Origin Destination Airplane Date}\ra\att{Flight Airline}$.
\end{itemize}
The reader can easily verify that these two FDs form a matching constraint.
On the other hand, consider the set consisting of the following two FDs:
\begin{itemize}
    \item $\att{Flight Date}\ra\att{Airline Origin Destination Airplane}$,
    \item $\att{Origin Destination Airplane Date}\ra\att{Flight Airline}$.
\end{itemize}
Here, we do not have a matching constraint since while it holds that $X\cup Y=X'\cup Y'=\set{\att{Flight},\att{Airline},\att{Date},\att{Origin},\att{Destination},\att{Airplane}}$, the set $X\cup X'$ misses the $\att{Airline}$ attribute.\qed
  \end{example}

  The generalization of Lemma~\ref{lemma:ABA} from $\Delta=\set{A\rightarrow B,B\rightarrow A}$ over $R(A,B)$ to the general case of a matching constraint is fairly straightforward. Given an input $\hat D$ for soft repairing over $\hat R$ and $\hat\Delta$, we construct an input $D$ over $R$ and $\Delta$ by defining unique values $a(\pi_X(\hat f))$ and $b(\pi_{X'}(\hat f))$ for the projections $\pi_X(\hat f)$ and $\pi_{X'}(\hat f)$ over $X$ and $X'$, respectively, of every fact $\hat f$ of $\hat D$. Then, the database $D$ is simply the set of all the pairs $a(\pi_X\hat f)$ and $b(\pi_{X'}\hat f)$ for all facts $\hat f$ of $D$: \[D\eqdef \set{(a(\pi_X\hat f), b(\pi_{X'}\hat f))\mid \hat f\in\hat D}\]
  In addition, we define
  $w_f \eqdef w_{\hat f}$ whenever $f=(a(\pi_X\hat f), b(\pi_{X'}\hat f))$ and 
  $w_{A\rightarrow B}\eqdef w_{X\rightarrow Y}$ and $w_{B\rightarrow A}\eqdef w_{X'\rightarrow Y'}$.
  Note that the mapping $f\rightarrow \hat f$ is reversible since $X\cup X'=\set{A_1,\dots,A_k}$. So, in order to solve
  soft repairing for $\hat D$, we solve it for $D$ and transform every fact $f$ of $D$ into the corresponding fact $\hat f$ of $\hat D$.
  We get the following result. The proof (given in the Appendix) is by showing the correctness of the reduction.
  
  \def\thmmatching{
  Soft repairing is solvable in polynomial time whenever $\Delta$ is a pair of FDs that constitutes a matching constraint.  
  }
  \begin{theorem}\label{thm:matching}
    \thmmatching
  \end{theorem}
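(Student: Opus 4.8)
The plan is to establish the correctness of the reduction described immediately before the theorem: the map $\hat f \mapsto f = (a(\pi_X\hat f), b(\pi_{X'}\hat f))$ is a cost-preserving bijection between subsets of $\hat D$ and subsets of $D$, after which Lemma~\ref{lemma:ABA} finishes the argument. First I would verify that this map is injective on $\hat D$: if two facts $\hat f,\hat g$ map to the same pair, then $\pi_X\hat f=\pi_X\hat g$ and $\pi_{X'}\hat f=\pi_{X'}\hat g$, and since $X\cup X'=\set{A_1,\dots,A_k}$ the two facts agree on every attribute, so $\hat f=\hat g$. This makes the assignment $w_f\eqdef w_{\hat f}$ well defined and turns the map into a bijection from $\hat D$ onto $D$ that lifts to a bijection $\hat E\leftrightarrow E$ between subsets. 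In particular $\hat D\setminus\hat E$ maps onto $D\setminus E$, so the deletion terms $\sum_{\hat f\in\hat D\setminus\hat E}w_{\hat f}$ and $\sum_{f\in D\setminus E}w_f$ coincide.

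The heart of the argument is to match the violation terms. I would prove that, for distinct $\hat f,\hat g\in\hat E$ with images $f=(a_f,b_f)$ and $g=(a_g,b_g)$, the pair $\set{\hat f,\hat g}$ violates $X\rightarrow Y$ if and only if $\set{f,g}$ violates $A\rightarrow B$ in $D$, and symmetrically for $X'\rightarrow Y'$ versus $B\rightarrow A$. For the first equivalence, note that because $X\cup Y=\set{A_1,\dots,A_k}$, distinct facts agreeing on $X$ must disagree on some attribute, which cannot lie in $X$ and is therefore in $Y\setminus X$; hence $\set{\hat f,\hat g}$ violates $X\rightarrow Y$ exactly when $\pi_X\hat f=\pi_X\hat g$, i.e.\ $a_f=a_g$. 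On the $D$ side, $\set{f,g}$ violates $A\rightarrow B$ exactly when $a_f=a_g$ and $b_f\neq b_g$. It thus remains to check that, under $a_f=a_g$, the conditions $\hat f\neq\hat g$ and $b_f\neq b_g$ agree: if $a_f=a_g$ and $b_f=b_g$ then $\pi_X\hat f=\pi_X\hat g$ and $\pi_{X'}\hat f=\pi_{X'}\hat g$, so $\hat f=\hat g$ by $X\cup X'=\set{A_1,\dots,A_k}$, while conversely $\hat f=\hat g$ forces $b_f=b_g$. The symmetric claim follows by exchanging the roles of $(X,A)$ and $(X',B)$. Since $w_{A\rightarrow B}\eqdef w_{X\rightarrow Y}$ and $w_{B\rightarrow A}\eqdef w_{X'\rightarrow Y'}$, these bijections between violations give $\sum_{\varphi}w_\varphi|\vio(\hat E,\varphi)|=\sum_{\psi}w_\psi|\vio(E,\psi)|$, and together with the deletion terms we obtain $\dbcost(\hat E\mid\hat D)=\dbcost(E\mid D)$.

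With this cost identity the theorem is immediate: minimizing $\dbcost(\cdot\mid\hat D)$ over subsets of $\hat D$ is equivalent to minimizing $\dbcost(\cdot\mid D)$ over subsets of $D$, which we solve in polynomial time by Lemma~\ref{lemma:ABA}; translating the optimal $E$ back through the polynomial-time computable inverse bijection (available because $X\cup X'=\set{A_1,\dots,A_k}$) yields an optimal $\hat E$. I expect the only genuinely delicate step to be the violation characterization of the previous paragraph --- specifically the joint use of the two covering hypotheses $X\cup Y=\set{A_1,\dots,A_k}$ and $X\cup X'=\set{A_1,\dots,A_k}$ to reduce ``distinct facts agreeing on $X$'' to ``equal $A$-value, differing $B$-value'', taking care that $X$ and $Y$ may overlap so that disagreement is forced onto $Y\setminus X$. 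Everything else is bookkeeping that carries over verbatim from the binary case treated in Lemma~\ref{lemma:ABA}.
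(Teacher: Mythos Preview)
Your proposal is correct and follows essentially the same approach as the paper's proof: both establish that the reduction $\hat f\mapsto f=(a(\pi_X\hat f),b(\pi_{X'}\hat f))$ is a cost-preserving bijection by matching deletion terms trivially and showing that $\{\hat f,\hat g\}$ violates $X\rightarrow Y$ (resp.\ $X'\rightarrow Y'$) iff $\{f,g\}$ violates $A\rightarrow B$ (resp.\ $B\rightarrow A$), using the covering hypotheses $X\cup Y=X'\cup Y'=X\cup X'=\{A_1,\dots,A_k\}$. Your presentation is in fact slightly more explicit than the paper's in two places: you spell out injectivity of the map (which the paper uses implicitly when it speaks of ``the fact $\hat f$ for every $f\in E$''), and you phrase the violation equivalence as a single biconditional rather than splitting it across the two directions of the cost equality.
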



\section{Conclusions and Open Problems}\label{sec:conclusions}
We studied the complexity of soft repairing for functional dependencies, where the goal is to find
 an optimal subset under penalties of deletion and constraint
 violation. The problem is harder than that computing a cardinality repair, and we have developed two new, nontrivial algorithms solving natural special cases.
 A full classification of the FD sets remains an open challenge for future research; specifically, the question is what fragment of the positive side of the dichotomy of
 Livshits et al.~\cite{DBLP:journals/tods/LivshitsKR20} remains positive when softness is allowed.
 We have also shown that the problem becomes tractable if we settle for a 3-approximation.

\paragraph*{Open Problems}
 Several directions are left open for future work. A direct open problem is to characterize the class of tractable FDs via a full dichotomy. The simplest sets of FDs where the complexity of soft repairing is open are the following:
 \begin{itemize}
 \item $\set{A\rightarrow B, A\rightarrow C}$. Note that this problem is different from $\set{A\rightarrow BC}$ that consists of a single FD.
 \item $\set{A\rightarrow B, B\rightarrow A}$ in the case where the schema has attributes different from $A$ and $B$, starting with $R(A,B,C)$.
 \item $\set{\emptyset\rightarrow A,B\rightarrow C}$.
 \end{itemize}

The problem is also open for classes of constraints that are more general than FDs, including equality-generating dependencies (EGDs), denial constraints, and inclusion dependencies. Yet, the problem for these types of dependencies is open already in the case of cardinality repairs, with the exception of some cases of EGDs~\cite{DBLP:journals/corr/abs-1904-06492}. Another clear direction is that of \e{update repairs} where we are allowed to change cell values instead of (or in addition to) deleting tuples and where complexity results are known for hard constraints~\cite{DBLP:conf/icdt/KolahiL09, DBLP:journals/tods/LivshitsKR20}.

\eat{
\section{Open Problems}
$\set{A\rightarrow B, A\rightarrow C}$:
This problem is open. Note that this problem is different from that of $\set{A\rightarrow BC}$ that is covered in the case of a single FD.

$\set{A\rightarrow B, B\rightarrow A}$ where there is also the attribute $C$.

$\set{\emptyset\rightarrow A,B\rightarrow C}$
}


\bibliography{main}

\newpage
\appendix\label{sec:appendix}
\section{Details for Section~\ref{sec:matching}}
In this section, we provide the missing proofs of Section~\ref{sec:matching}. For convenience, we give the results again here.


\begin{replemma}{\ref{lemma:fcostw}}
\lemmaMCMFsecond
\end{replemma}
\begin{proof}
First, note that it cannot be the case that $f(s',v_\val{a}^j)=0$ while $f(s',v_\val{a}^i)=1$ for some $j<i$ and $i\in\set{1,\dots,\#_{D.A}(\val{a})}$. Otherwise, we can construct a different integral flow $f'$ with $f'(s',v_\val{a}^j)=f'(v_\val{a}^j,v_\val{a})=1$, $f'(s',v_\val{a}^i)=f'(v_\val{a}^i,v_\val{a})=0$, and $f'(e)=f(e)$ for every other edge $e$.
It holds that $\cost(f')=\cost(f)-c(v_\val{a}^i,v_\val{a})+c(v_\val{a}^j,v_\val{a})$, and since $c(v_\val{a}^i,v_\val{a})>c(v_\val{a}^j,v_\val{a})$ we will have that $\cost(f')<\cost(f)$ in contradiction to the fact that $f$ is a solution to MCMF on $\N_k$. Therefore, for every $\val{a}\in D.A$, if the flow entering the node $v_\val{a}$ is $\ell$, then $f(s',v_\val{a}^i)=f(v_\val{a}^i,v_\val{a})=1$ if $i\le\ell$ and $f(s',v_\val{a}^i)=f(v_\val{a}^i,v_\val{a})=0$ otherwise.
Thus, the total cost of the edges of the form $(v_\val{a}^i,v_\val{a})$ is $\sum_{i=1}^\ell \left[(i-1) w_{\varphi_1}\right] = \frac{1}{2}\ell(\ell-1)w_{\varphi_1}$.
By the definition of $f[D]$, there are $\#_{f[D].A}(\val{a})$ edges of the form $(v_\val{a},u_\val{b})$ for which $f(v_\val{a},u_\val{b})=1$.
By the definition of a flow, this is also the flow entering the node $v_\val{a}$, and we have that $\ell=\#_{f[D].A}(\val{a})$.
We conclude that the total cost of the flow on edges of the form $(v_\val{a}^i,v_\val{a})$ is 
$\sum_{\val{a}\in f[D].A}\left[\frac{1}{2}\cdot\#_{f[D].A}(\val{a})\cdot(\#_{f[D].A}(\val{a}) - 1)\cdot w_{\varphi_1}\right] = \vio(f[D],\varphi_1)\cdot w_{\varphi_1}$.
The same argument shows that the total cost of the flow on edges of the form $(u_\val{b},u_\val{b}^i)$ is $\vio(f[D],\varphi_2)\cdot w_{\varphi_2}$.

\eat{
Similarly, the total cost of the edges of the form $(u_\val{b},u_\val{b}^i)$ for which $f(u_\val{b},u_\val{b}^i)=1$ is 
$\vio(f[D],\varphi_2)\cdot w_{\varphi_2}$.
Indeed, for every $\val{b}\in f[D].B$, the flow entering the node $u_\val{b}$ is $\#_{f[D].B}(\val{b})$. We can again show that if $f(u_\val{b},u_\val{b}^i)=1$ for some $i$, then it is also the case that $f(u_\val{b},u_\val{b}^j)=1$ for every $j<i$. Hence, for every value $\val{b}$, the total cost of the edges of the form $(u_\val{b},u_\val{b}^i)$ for which $f(u_\val{b},u_\val{b}^i)=1$ is $\sum_{i=1}^{\#_{f[D].B}(\val{b})} \left[(i-1)\cdot w_{\varphi_2}\right]=\frac{1}{2}\#_{f[D].B}(\val{b})\cdot(\#_{f[D].B}(\val{b}) - 1)\cdot w_{\varphi_2}$.
Therefore, the total cost of the edges of the form $(u_\val{b},u_\val{b}^i)$ for which $f(u_\val{b},u_\val{b}^i)=1$ is 
$\sum_{\val{b}\in f[D].B}\left[\#_{f[D].B}(\val{b})\cdot(\#_{f[D].B}(\val{b}) - 1)\cdot w_{\varphi_2}\right] = \vio(f[D],\varphi_2)\cdot w_{\varphi_2}$.
}

Finally, the total cost of the edges of the form $(v_\val{a},u_\val{b})$ is $\sum_{g\in f[D]}{(-w_g)}$ by the definition of $f[D]$ and the construction of the network.
We conclude that:
$$\cost(f)=\left(\sum_{g\in f[D]}{(-w_g)}\right)+\vio(f[D],\varphi_1)\cdot w_{\varphi_1}+\vio(f[D],\varphi_2)\cdot w_{\varphi_2}$$
and $\cost(f)=w_D(f[D])$ by definition.
\end{proof}

\begin{reptheorem}{\ref{thm:matching}}
\thmmatching
\end{reptheorem}
  \begin{proof}
  We prove that $D$ has a subset $E$ with $\cost(E\mid D)=k$ if and only if $\hat D$ has a subset $\hat E$ with $\cost(\hat E\mid \hat D)=k$.
  Let $E$ be a subset of $D$ with cost $k$. Let $\hat E$ be a subset of $\hat D$ that includes the fact $\hat f$ for every $f\in E$. By definition, we have that $\sum_{f\in(D\setminus E)w_f}=\sum_{f\in(\hat D\setminus \hat E)w_{\hat f}}$; hence, it is left to show that $\sum_{\varphi\in\depset}w_\varphi|\vio(E,\varphi)|=\sum_{\hat\varphi\in\hat\depset}w_{\hat\varphi}|\vio(\hat E,\hat\varphi)|$. Let $f,g\in E$ such that $\set{f,g}\not\models (A\rightarrow B)$. Hence, it holds that $f[A]=g[A]$ while $f[B]\neq g[B]$. From the construction of $D$, we have that $\pi_X\hat f=\pi_X\hat g$, while $\pi_{X'}\hat f\neq\pi_{X'}\hat g$. Thus, there is an attribute $A_i\in X'$ such that $\hat f[A_i]\neq \hat g[A_i]$ and since $A_i\not\in X$ and $X\cup Y=\set{A_1,\dots,A_k}$, it holds that $A_i\in Y$. We conclude that $\set{\hat f,\hat g}\not\models (X\rightarrow Y)$. We can similarly prove that if $\set{f,g}\not\models (B\rightarrow A)$, then $\set{\hat f,\hat g}\not\models (X'\rightarrow Y')$. Finally, because $w_{A\rightarrow B}=w_{X\rightarrow Y}$ and $w_{B\rightarrow A}=w_{X'\rightarrow Y'}$ it holds that $\sum_{\varphi\in\depset}w_\varphi|\vio(E,\varphi)|=\sum_{\hat\varphi\in\hat\depset}w_{\hat\varphi}|\vio(\hat E,\hat\varphi)|$.
  
  For the other direction, let $\hat E$ be a subset of $\hat D$, and let $E$ be the subset of $D$ that includes the fact $f$ for every $\hat f\in \hat E$. It is again straightforward that $\sum_{f\in(D\setminus E)w_f}=\sum_{f\in(\hat D\setminus \hat E)w_{\hat f}}$. Now, let $\hat f,\hat g\in \hat E$ such that $\set{\hat f,\hat g}\not\models (X\rightarrow Y)$. We have that $\hat f[A_i]=\hat g[A_i]$ for every $A_i\in X$; thus, $\pi_X\hat f=\pi_X\hat g$ and from the construction of $D$, it holds that $f[A]=g[A]$. On the other hand, the fact that $\hat f[A_i]\neq\hat g[A_i]$ for some $A_i\in Y$ together with the fact that $X\cup Y=X\cup X'=\set{A_1,\dots,A_k}$ imply that $\pi_{X'}\hat f\neq\pi_{X'}\hat g$ and $f[B]\neq g[B]$. Hence, $\set{f,g}\not\models (A\rightarrow B)$. We can similarly prove that if $\set{\hat f,\hat g}\not\models (X'\rightarrow Y')$, then $\set{f,g}\not\models (B\rightarrow A)$, which again implies that $\sum_{\varphi\in\depset}w_\varphi|\vio(E,\varphi)|=\sum_{\hat\varphi\in\hat\depset}w_{\hat\varphi}|\vio(\hat E,\hat\varphi)|$, and the concludes our proof.
    \end{proof}

\end{document}